\theoremstyle{plain}
\newtheorem{theorem}{Theorem}
\newtheorem{lemma}[theorem]{Lemma}
\newtheorem{corollary}[theorem]{Corollary}
\newcommand{\OO}{\ensuremath{\mathcal{O}}}
\newcommand{\GF}{\mathbb{GF}}
\newcommand{\Expect}{\mathrm{E}}
\newcommand{\eg}{e.g.,\xspace}
\newcommand{\ie}{i.e.,\xspace}
\newcommand{\eos}{\text{~~.}}
\newcommand{\eoss}{~~\text{,}}
\newcommand{\coloneqq}{\mathrel{\mathop:}=}
\newcommand{\floor}[1]{\left\lfloor #1 \right\rfloor}
\begin{document}
\title{Cyclone Codes}

\author[1]{Christian Schindelhauer}
\author[2]{Andreas Jakoby}
\author[3]{Sven Köhler}
\affil[1]{University of Freiburg, Germany\\
	\texttt{schindel@informatik.uni-freiburg.de}}
\affil[2]{Bauhaus-Universität Weimar, Germany\\
	\texttt{andreas.jakoby@uni-weimar.de}}
\affil[3]{University of Freiburg, Germany\\
	\texttt{koehlers@informatik.uni-freiburg.de}}

\maketitle

\begin{abstract}
We introduce Cyclone codes which are rateless erasure resilient codes.
They combine 
Pair codes with Luby Transform (LT) codes by computing  a code symbol from 
a random set of data symbols using bitwise XOR and cyclic shift operations. The number of 
data symbols is chosen according to the Robust Soliton distribution. XOR and 
cyclic shift operations establish a unitary commutative ring if data symbols 
have a length of $p-1$ bits, for some prime number $p$. We consider the graph 
given by code symbols combining two data symbols. If $n/2$ such random pairs 
are given for $n$ 
data symbols, then a giant component appears, which can be resolved in linear time. We can extend Cyclone codes to data symbols of arbitrary even 
length, provided the Goldbach conjecture holds.

Applying results for this giant component, it follows that Cyclone codes have 
the same encoding and decoding time complexity as LT codes, while the overhead 
is upper-bounded by those of LT codes. Simulations indicate that Cyclone codes 
significantly decreases the overhead of extra coding symbols.
\end{abstract}

\section{Introduction}

Currently, video streaming is the dominant source of traffic in the Internet. 
Here, most packets are delivered best effort, which means that packets can be 
erased anytime. The missing packet needs to be resent, which costs a round trip 
time. In result, the video may be halted and rebuffered if not enough packets 
are buffered before-hand. This can be avoided if the network layer (and link 
layer) provides real-time behavior. For this, IPv6 provides special 
quality-of-service flags in order to prioritize media packets. The best 
approach so far is forward error correction with so-called erasure codes.


This is only one of many applications for erasure codes, where additional redundant packets are added such that the original packets can be reconstructed from the remaining packets. RAID disks and long distance satellite communications are other prominent examples. 

\paragraph*{Our Results}
By combining several techniques from previous erasure resilient codes, namely 
Pair codes \cite{ortolf2009paircoding}, Circulant Cauchy codes 
\cite{Schindelhauer2013}, and Luby Transform (LT) codes \cite{Luby02}, we 
present a novel erasure resilient code system called Cyclone codes.

The number of data symbols combined by XOR and cyclic shift operations is 
chosen according 
to the Robust Soliton distribution. XOR and cyclic shift operations establish a unitary commutative 
ring if the data symbol has length $p-1$ for some prime number $p\geq 3$. We 
consider the graph induced 
by code symbols describing two data symbols. If $n/2$ such random pairs are 
given for $n$ data symbols, 
then the giant component appears \cite{erdHos1959random}, which can be resolved 
using these operations. We can extend Cyclone codes to data symbols of 
arbitrary even length provided the Goldbach conjecture holds, which has been 
shown for any reasonable symbol length.


Cyclone codes
are rateless, non-systematic, memoryless erasure codes which share their 
asymptotic coding and decoding time and asymptotic coding overhead with LT 
codes. Yet, simulations show that it considerably improves the coding overhead 
by at least 40\% compared to LT codes.

%
%



\section{Model}

We assume a sender and receiver connected by a faulty communication
channel in the \emph{packet erasure model}: (a) packets are delivered in the 
correct order and (b) individual packets are either correctly delivered or the 
receiver is notified about their loss.

Data symbols (the input to the erasure code) are denoted by
$x_1, x_2, \ldots, x_n$, where $n$ is the number of available data symbols.
The code symbols (the output of the erasure code) are denoted by $y_1, y_2, \ldots, 
y_m$, where $m$ denotes, depending on context, the number of code symbols 
transmitted or received. Data and code symbols are $w$-bit sequences.

%

For all three erasure codes discussed in this paper, the code symbols are
linear combinations of several data symbols. The indexes of the data symbols 
used as well as the coefficients are random numbers. We assume that sender and 
receiver have access to a shared source of randomness, \ie the receiver can 
reproduce the random numbers drawn by the sender. This can be achieved, for 
example, by using a pseudo-random generator with the same initialization
on both sides. We do not discuss the issue how both sides can agree on the 
same initialization. Also, our analysis neglects questions regarding the time 
needed to generate the (pdeudo-)random numbers or the reliability of their 
randomness.
Note that the packet erasure model allows the receiver to reproduce the
random number used in the computation of every code symbol. Alternatively, the 
sender may embed these parameters in each packet.

\section{Related Work}


For a short history of coding theory we refer to the excellent survey of \cite{4282117}. 
The notion of an erasure channel was introduced in \cite{elias1955coding}. The standard method for 
erasure codes are Reed-Solomon codes \cite{Reed:1960:PCC}. In 
\cite{BloemerEtAlAnXorBasedErasureResilient1995},
Cauchy matrices have been proposed for the encoding. That results in a 
systematic code
where data and code symbols can be used to reconstruct all data 
symbols.
Symbols are elements in a Galois field and the number 
of finite field operations to encode $k$ code symbols is $\OO(k n)$ and 
decoding all data 
given $k$ code symbols and $n-k$ data symbols takes $\OO(k n)$ finite field 
operations.   In \cite{li2005efficient} an efficient implementation of this 
systematic code is described using the word length of  a processor. In 
\cite{Schindelhauer2013}  the circuit complexity of the Cauchy matrix approach 
was improved to to $(3+o(1)) knw$ operations for encoding $k$ code symbols and 
$9 k n w$ XOR bit operations for decoding from $k$ code symbols for symbol size 
$w$, which is also the asymptotic lower bound for perfect systematic erasure codes \cite{journals/tit/BlaumR99}.

In \cite{Schindelhauer2013} circulant matrices have been used with the word length $w$, where  $w+1$ is a prime number and 2 is a primitive root for $w+1$. 
If $w$ does not have this property, $w$ can be partitioned 
into sub-codes with word size $w_i$, such that $w=\sum_i w_i$ and each $w_i+1$ 
has this property. 

Circulant matrices are equivalent to cyclotomic rings introduced by Silverman~\cite{silverman1999fast} and can be used
for fast multiplication in finite fields. They are used in 
\cite{geiselmann2002new, 
jutla2002circuit} for a small complexity arithmetic circuits for finite 
fields.  Other applications are VLSI implementations  \cite{wu2002finite}, fast 
multiplication \cite{chang2005low}, quantum bit operations 
\cite{amento2012quantum}, and public key cryptography 
\cite{mahalanobis2013matrices}.

The computational complexity of erasure codes can be improved, if one receives more code symbols than data symbols.  An attempt to overcome the limitation 
of the large coding and decoding complexity are Tornado codes {\cite{Byers:1998:DFA:285237.285258, luby2001efficient}, 
where data symbols are encoded by XORs described by a cascaded graph sequence 
combining bipartite sub-graphs.  
For Tornado codes an overhead of $1+\epsilon$ code symbols were produced, which 
could be coded and decoded in time 
$\OO(n \log (1/\epsilon))$.  Starting from this observations, Luby presented LT 
codes in his seminal paper \cite{Luby02}. They use a random set of data symbols 
combined by XOR into the code symbols.
The underlying Robust Soliton distribution has two parameters $c$ and $\delta$, which has been optimized for small 
$n$ in \cite{HTV06}. Luby has shown that, with high probability, LT codes have 
an overhead of only $\OO(\sqrt{n} \log n)$ code symbols to allow 
decoding. Furthermore the coding and decoding time is at most $\OO(n \log n)$ 
parallel XOR operations (multiplied by $w$ for the number of bits in the 
underlying symbols).

Based on LT codes, raptor codes have been introduced  
\cite{shokrollahi2006raptor}. For any given $\epsilon>0$ it is possible to 
recover from $n(1+ \epsilon)$ code symbols to the original $n$ data symbol with 
a complexity of $\OO(\log(1/\epsilon))$ operations for encoding a code symbol 
and $\OO(k \log(1/\epsilon))$ operations to recover $k$ data symbols.  Except 
Cyclone codes, introduced here,  LT codes are the only way to implement raptor 
codes. For this, LT codes have been analyzed in more detail  
\cite{6142082} to improve the behavior of raptor codes. 
\section{Pair Codes}

In \cite{ortolf2009paircoding}  an erasure code called \emph{Pair codes}  is introduced.
The name reflects that each code symbol is a linear combination
of two distinct data symbols, \ie
\[
	y_\ell \coloneqq f_{\ell,1} x_{i_{\ell,1}} + f_{\ell,2} x_{i_{\ell,2}} \eos
\]
The encoder chooses the indexes $i_{\ell,1}, i_{\ell,2}\in [1,n]$ and the 
non-zero coefficients $f_{\ell,1}, f_{\ell,2}$ uniformly at random. The 
coefficients as well as data and code symbols are seen as elements of 
$\GF(2^w)$ with $w\ge 2$, a finite field of size $2^w$. To compute a code 
symbol, one addition 
and two multiplications in $\GF(2^w)$ are required, leading to a total encoding 
complexity of $m$ additions and $2m$ multiplications.

The decoder builds the graph $G_y$ with 
the vertex set $\{x_1, x_2, \ldots, x_n\}$. For each code symbol $y_\ell$,
the graph $G_y$ contains an edge $e_\ell = (x_{i_{\ell,1}},  x_{i_{\ell,2}})$.
The decoder 
uses the following operations:
\begin{itemize}
\item The \emph{Single Rule}: Given an edge $e_\ell$ connecting a decoded 
$x_{i_{\ell,1}}$ with an undecoded $x_{i_{\ell,2}}$, we can decode 
$x_{i_{\ell,2}}$ by
$x_{i_{\ell,2}} = f_{\ell,2}^{-1} ( y_\ell + f_{\ell,1} x_{i_{\ell,1}} )$.
\item \emph{Parallel Edge Resolution}: 
Given two parallel edges $e_k$ and $e_\ell$, \ie
$i_{k,1} = i_{\ell,1}$ and $i_{k,2} = i_{\ell,2}$,
decoding $x_{i_{\ell,2}}$ is possible by solving the system
\[
	y_k \ \ = \ \ 
	f_{k,1} x_{i_{\ell,1}} + f_{k,2} x_{i_{\ell,2}}\qquad\text{and}\qquad
	y_\ell \ \ = \ \ 
	f_{\ell,1} x_{i_{\ell,1}} + f_{\ell,2} x_{i_{\ell,2}}
\]
if the coefficients of $y_k$ and $y_\ell$ are linearly independent.
This yields
\[
	x_{i_{\ell,2}} = (f_{k,1} f_{\ell,2} + f_{k,2} f_{\ell,1})^{-1}
		(f_{k,1} y_\ell + f_{\ell,1} y_k) \eos
\]
\item \emph{Edge Contraction}: Given two adjacent edges $e_k$ and $e_\ell$, \ie 
$i_{k,2} = i_{\ell,1}$, the corresponding code symbols
\[
	y_k \ \ = \ \ 
	f_{k,1} x_{i_{k,1}} + f_{k,2} x_{i_{\ell,1}}\qquad\text{and}\qquad
	y_\ell \ \ = \ \ 
	f_{\ell,1} x_{i_{\ell,1}} + f_{\ell,2} x_{i_{\ell,2}}
\]
can be transformed into a new edge $e'=(x_{i_{k,1}}, x_{i_{\ell,2}})$ and 
a code symbol $y'$ with 
\[
	y' = f_{\ell,1} y_k + f_{k,2} y_\ell =
		f_{k,1}f_{\ell,1} x_{i_{k,1}} + f_{k,2}f_{\ell,2} x_{i_{\ell,2}} \eos
\]
\item The \emph{Double Rule}: Given a cycle $(e_1, e_2, \ldots e_t)$ in $G_y$,
applying the edge contraction rule $t-2$ times to the path $(e_2, e_3, \ldots 
e_t)$ yields an edge $e'$ parallel to $e_1$. Then,   
the parallel edge resolution allows to decode the corresponding data symbols.
\end{itemize} 
One can easily verify that for the single rule one needs to perform one inversion, two multiplications, 
and one addition per data symbol. For the parallel edge resolution one needs to execute one inversion, five 
multiplications, and two additions. The edge contraction only needs four 
multiplications and one addition. 
Considering circuit complexity additions and multiplications can be implemented by linear size and 
constant depth circuits if unbounded MOD2 
gates are available. If we cannot use a look up table for the
inversion, then current implementations, like in~\cite{silverman1999fast},
use linear depth circuits. 
Thus to improve the decoding complexity one has either to minimize the number of inversions or one has to look 
for a coding system, that uses specific values for $f_{i,j}$ which allow 
efficient inversion.

To improve the complexity of the double rule, one has to modify the graph $G_y$ whenever one perform the 
double rule in such a way the the resulting graph reduces its diameter. Within a perfect scenario $G_y$ is a star graph.

The decoder applies the single rule whenever possible. Using this rule,
any connected component of $G_y$ can be decoded, as soon as its first
data symbol is known. As unencoded code symbols are never sent,
the double rule is applied to initiate decoding. It can be applied 
to any connected component of $G_y$ which contains a cycle. 
We note that if the double rule is applied to some cycle $C$, the parallel
edge resolution may fail since the coefficients are linearly dependent.
This is the case if only if $f_{k,1} f_{\ell,2}= f_{k,2}f_{\ell,1}$
which happens with probability $\frac{1}{2^w-1}$ as $f_{k,1}$ is uniformly 
distributed. In this case we can discard one edge of $C$ and the corresponding code symbol, 
as it is a linear combination of the code symbols corresponding to the other
edges of $C$.

To discuss the number of cycles of a connected component we use the notion of 
\emph{excess} of a component, which is the difference between its numbers of 
edges and its number of nodes. The above yields the following result:



\begin{corollary}
The expected number of applications of the double rule per connected component
until it succeeds is $1+\frac{1}{2^w-2}$.
The probability of decoding a connected component $U$ of $G_y$ is 
$1-\left( \frac{1}{2^w-1} \right)^{\mathrm{excess}(U)+1}$.
\end{corollary}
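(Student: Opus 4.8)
The plan is to describe the repeated application of the double rule to a fixed connected component $U$ as a sequence of Bernoulli trials whose failure probability is the $\frac{1}{2^w-1}$ computed above, and to make the \emph{independence} of these trials transparent by organising them around a spanning tree of $U$ and its fundamental cycles. Throughout, let $U$ have $n$ nodes and $m$ edges, so $\mathrm{excess}(U)=m-n$.

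First I would fix a spanning tree $T$ of $U$. The $c\coloneqq m-(n-1)=\mathrm{excess}(U)+1$ edges of $U$ outside $T$ are in bijection with the fundamental cycles $C_1,\dots,C_c$ of $U$ with respect to $T$, and each $C_i$ contains its associated non-tree edge $e^{(i)}$ and no other non-tree edge. I would apply the double rule to these fundamental cycles one at a time, always taking the designated edge ``$e_1$'' of the cycle to be the private non-tree edge $e^{(i)}$; whenever an application fails, I discard $e^{(i)}$ together with its code symbol, which is legitimate since $e^{(i)}$ lies on $C_i$, so $U$ stays connected and still contains $T$. Because $e^{(i)}$ occurs in no other fundamental cycle, every $C_j$ with $j>i$ is still intact when we reach it. Now, for fixed values of the coefficients on the tree edges, the edge contractions along $C_i$ are always well defined (all coefficients are nonzero, so are their products), and the resulting parallel-edge resolution fails exactly when the ratio of the two coefficients of $e^{(i)}$ equals a fixed nonzero element determined by the tree coefficients along $C_i$; since those two coefficients are drawn uniformly and independently from the nonzero field elements, this has probability exactly $\frac{1}{2^w-1}$, and — the $e^{(i)}$ being pairwise disjoint — the $c$ failure events are mutually independent. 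Averaging over the tree coefficients preserves both statements.

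With this in hand both claims follow. For the first, the number of applications of the double rule until the first success is geometric with success probability $q=1-\frac{1}{2^w-1}=\frac{2^w-2}{2^w-1}$, hence of expectation $\frac{1}{q}=\frac{2^w-1}{2^w-2}=1+\frac{1}{2^w-2}$ (the truncation at $c$ applications in a finite component only lowers this, so $1+\frac{1}{2^w-2}$ is also a valid upper bound). For the second, $U$ fails to be decoded if and only if all $c$ applications fail: once any single application succeeds, the data symbols on the resolved cycle are known, and since the surviving edges still contain $T$ and hence span $U$, the single rule decodes the rest. By independence this has probability $\left(\frac{1}{2^w-1}\right)^{c}=\left(\frac{1}{2^w-1}\right)^{\mathrm{excess}(U)+1}$, and the complement is the stated probability.

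I expect the main obstacle to be exactly the independence claim, \ie verifying that the linear-dependence condition governing each application, after the chain of $t-2$ edge contractions along $C_i$, is genuinely a nonconstant relation in the two coefficients of the private non-tree edge $e^{(i)}$ and shares no coefficient with another cycle. This is what the spanning-tree bookkeeping is designed to guarantee; without it, overlapping cycles share coefficients and neither the marginal $\frac{1}{2^w-1}$ nor the product formula is immediate. Everything else — the geometric-distribution computation and the remark that an edge lying on a cycle is never a bridge — is routine.
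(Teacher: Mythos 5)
Your proposal is correct and follows essentially the same route as the paper, which states the corollary as an immediate consequence of the preceding observation that a single parallel-edge resolution fails with probability $\frac{1}{2^w-1}$ and that the failed edge can be discarded. Your spanning-tree/fundamental-cycle bookkeeping merely makes explicit the independence of the $\mathrm{excess}(U)+1$ trials, which the paper leaves implicit; this is a welcome clarification rather than a different approach.
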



\begin{figure}[ht]\centering
\centering{\small%
\begin{tikzpicture}[x=9mm, y=11mm]
\tikzstyle{process}=[circle, inner sep=0, draw, minimum size=2ex]
\tikzstyle{eblue}=[blue, line width=0.8pt]
\tikzstyle{ered}=[red, line width=0.8pt]
\newcommand{\nodes}[3]{
	\foreach \i in {1,2,...,8} {
		\node[process] at ({\i*360/8-90-360/8-360/16}:1) (c\i) {};
	}
	\path
		(c2) ++(0.75,0) node[process] (e1) {}
		(c3) ++(-15:0.75) node[process] (e2) {}
		(c4) ++(-15:0.75) node[process] (e3) {}
		(c4) ++(15:0.75) node[process] (e4) {}
	;
	\path
		(c1) ++(-90:3ex) node {$x_1$}
		(c2) ++(-90:3ex) node {$x_2$}
		(c8) ++({#1}:3ex) node {$x_5$}
		(e3) ++({#2}:3ex) node {$x_3$}
		(e2) ++({#3}:3ex) node {$x_4$}
	;
};
\newcommand{\edges}{
	\path (c1) edge (c8)
		(c3) edge (e1)
		(c3) edge (e2)
		(c4) edge (e3)
		(c4) edge (e4)
	;
}

\begin{scope}[shift={(0,0)}]
	\nodes{20}{-55}{-55}
	\edges
	\path 
		(c1) edge[eblue] (c2)
		(c2) edge[eblue] (c3)
		(c3) edge (c4)
		(c4) edge (c5)
		(c5) edge (c6)
		(c6) edge (c7)
		(c7) edge (c8)
		;
\end{scope}
\begin{scope}[shift={(4,0)}]
	\nodes{20}{-55}{-55}
	\edges
	\path 
		(c1) edge (c2)
		(c1) edge[eblue] (c3)
		(c3) edge[eblue] (c4)
		(c4) edge (c5)
		(c5) edge (c6)
		(c6) edge (c7)
		(c7) edge (c8)
		;
\end{scope}
\begin{scope}[shift={(8,0)}]
	\nodes{20}{-55}{-55}
	\edges
	\path 
		(c1) edge (c2)
		(c1) edge (c3)
		(c1) edge[eblue] (c4)
		(c4) edge[eblue] (c5)
		(c5) edge (c6)
		(c6) edge (c7)
		(c7) edge (c8)
		;
\end{scope}
\begin{scope}[shift={(12,0)}]
	\nodes{110}{-55}{-55}
	\edges
	\path 
		(c1) edge (c2)
		(c1) edge (c3)
		(c1) edge (c4)
		(c1) edge (c5)
		(c1) edge (c6)
		(c1) edge (c7)
		(c1) edge (c8)
		(c1) edge[eblue, bend left=30] (c8)
		(e2) edge[ered] (e3)
		;
\end{scope}
\end{tikzpicture}}
\caption{Influence of graph modifications on the cycle length.\label{cycle01}}
\end{figure}

As illustrated in Figure~\ref{cycle01}, performing the double rule and removing 
the correct redundant edge replaced a cycle with a star graph. While 
Pair codes have nearly linear coding and decoding complexity, they suffer from 
the coupon collector problem, as we see in Section~\ref{sec:simulations}.
%

\section{Luby Transform Codes}
\label{sec:lt}


With Luby Transform codes \cite{Luby02}, each code symbol is the bitwise XOR 
of $k_\ell$ distinct data symbols,
where $k_\ell\in [1,n]$ is chosen from a \emph{special} random distribution and
the $k_\ell$ distinct indexes $i_{\ell,j}$ are chosen uniformly at random.
We call $y_\ell$ a clause of size $k_\ell$ and write
\[
	y_\ell = \sum_{j=1}^{k_\ell} x_{i_{\ell,j}}\ .
\]
A basic principle of LT codes is that the size of clauses can be reduced as
more and more data symbols are decoded. Namely, if $x_{i_{\ell,d}}$ for some
$d\in [1,k_\ell]$ has been decoded, then $y_\ell$ can be replaced with the 
following clause of size $k_\ell-1$:
\[
	y' = y_\ell + x_{i_{\ell,d}} = \sum_{j\in \{1,\ldots, k_{\ell}\} 
	\setminus\{d\}}  x_{i_{\ell,j}} \eos
\]
The decoder greedily reduces the size of clauses until the size of a clause is one. Then 
a data symbol has been successfully decoded. However, 
decoding 
cannot start unless the encoder sends clauses of size one. Unlike Pair 
codes, the decoder does not exploit linearly independent clauses of size 2 or 
larger.

The encoding and decoding complexity as well as the overhead depends heavily on
the distribution of the clause sizes. Luby discusses 
two distributions. 

The 
\emph{Ideal Soliton distribution} 
is given by the probability mass function 
$\rho(k)$:
\[
	\rho(k) \coloneqq \begin{cases}
		\displaystyle\frac{1}{n} &
			\text{for $k=1$}\\[2ex]
		\displaystyle\frac{1}{k(k-1)} &
			\text{for $k\in \{2,3,\ldots, n\}$} \eos
	\end{cases}
\]
In the average 
this distribution produces one clause of size one per $n$ code 
symbols and every second code symbol is a clause of size two. 
The 
expected clause size is $H_n$, where $H_n\approx \ln(n)$ is the $n$-th harmonic 
number.

Because of the small number of unary clauses and the small probability that all data symbol are addressed,
the Ideal Soliton distribution works poorly 
\cite{Luby02}. 
Thus, Luby introduces the \emph{Robust Soliton distribution} $\mu$. It is a 
combination of $\rho$ and another distribution $\tau$.
Let $R=c\cdot \ln(n / \delta)\sqrt{n}$, where $c>0$ and $\delta\in(0,1)$ are
tunable parameters, and define
\[
	\tau(k) \coloneqq \begin{cases}
		\displaystyle\frac{R}{kn} &
			\text{for }k=1,2,\ldots, \floor{\frac{n}{R}-1}\\[2ex]
		\displaystyle\frac{R\ln(R/\delta)}{n} &
			\text{for }k=\floor{\frac{n}{R}}\\[1ex]
		0 & \text{otherwise} \eos
	\end{cases}
\]
The Robust Soliton distribution is 
defined as 
\[
	\mu(k) \coloneqq \frac{\rho(k)+\tau(k)}{\beta}
	\quad\text{ with }\quad
	\beta=\sum_{k=1}^{n} \rho(k)+\tau(k)\eoss
\]
where the factor of $1/\beta$ normalizes the probability mass function.

The addition of $\tau$ boosts the probability of clauses with sizes
less than $n/R$. In particular, the expected number of clauses of size
$1$ is increased from $1$ to about $R$ per $n$ clauses.
The expected clause size of the Robust Soliton distribution is 
$\OO(\ln(n/\delta))$ \cite{Luby02} and thus is asymptotically equal to the 
expected clause size of the Ideal Soliton distribution if $\delta$ is constant.
This observation directly implies the following result for constant symbol size 
$w$:

\begin{corollary}
The expected encoding and decoding complexity of an LT code is $\OO(m \ln(n))$ 
XOR word operations using the Ideal Soliton distribution
and $\OO(m \ln(n/\delta))$ XOR word operations using the Robust Soliton 
distribution.

Under the Robust Soliton distribution $m=n+\OO(\sqrt{n} \log n)$ code symbols suffice to decode all data symbols with high probability, \ie $1-n^{-\OO(1)}$. 
\end{corollary}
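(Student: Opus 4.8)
The plan is to treat the two assertions separately, reducing the first to linearity of expectation over the clause sizes and the second to Luby's ripple analysis.

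\emph{Complexity.} For a code symbol $y_\ell$ of clause size $k_\ell$, computing $y_\ell = \sum_{j=1}^{k_\ell} x_{i_{\ell,j}}$ costs $\OO(k_\ell)$ XOR word operations when the symbol size $w$ is constant, so by linearity of expectation encoding $m$ code symbols costs $\OO(m\,\Expect[k_\ell])$ XOR word operations in expectation. For decoding, observe that the only arithmetic performed is the reduction step that replaces a clause of size $k$ by the clause of size $k-1$ obtained after XOR-ing in a freshly decoded data symbol; this is $\OO(1)$ XOR word operations, and a clause that starts with size $k_\ell$ can undergo at most $k_\ell - 1$ such reductions before it reaches size one (or decoding halts). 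Hence the total number of XOR word operations in decoding is $\OO\!\left(\sum_{\ell=1}^m k_\ell\right)$, of expectation $\OO(m\,\Expect[k_\ell])$ as well; the auxiliary bookkeeping — for each data symbol the list of clauses containing it, and for each clause its current size — can be maintained within the same asymptotic budget using adjacency lists. It now suffices to substitute the expected clause sizes already recorded above, namely $\Expect[k_\ell] = H_n = \Theta(\ln n)$ for the Ideal Soliton distribution and $\Expect[k_\ell] = \OO(\ln(n/\delta))$ for the Robust Soliton distribution, which gives $\OO(m\ln n)$ and $\OO(m\ln(n/\delta))$ respectively.

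\emph{Overhead.} Here I would invoke the decoding analysis of \cite{Luby02}. The decoder's state is captured by the \emph{ripple}: the set of data symbols currently covered by a clause of size one that have not yet been processed. Decoding of all $n$ symbols succeeds exactly when the ripple never runs dry before the last symbol is released. Luby shows that the $\tau$-component of the Robust Soliton distribution is tuned so that the expected ripple size stays near $R = c\,\ln(n/\delta)\sqrt n$ throughout the process, and a random-walk / Chernoff-type argument then establishes that, after receiving $m = n + \OO(\sqrt n \cdot \mathrm{polylog}(n/\delta))$ code symbols, the ripple remains non-empty until completion with probability at least $1-\delta$. Taking $\delta = n^{-\gamma}$ for a constant $\gamma > 0$ turns this into success probability $1 - n^{-\OO(1)}$, and since then $\ln(n/\delta) = \Theta(\ln n)$, the overhead term is $\OO(\sqrt n \log n)$, as claimed.

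The routine part is the complexity computation, which is nothing more than linearity of expectation applied to the stated expected clause size. The actual obstacle is the overhead bound: contrary to what the phrase ``directly implies'' above might suggest, it does not follow from the expected clause size but from Luby's delicate control of the ripple — bounding the failure probability of a constrained random walk whose drift and variance are exactly what the Robust Soliton distribution is engineered to produce. Rather than reproving this from scratch, I would cite \cite{Luby02} and only adapt the parameter $\delta$ to obtain the high-probability form.
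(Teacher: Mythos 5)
Your proposal is correct and follows essentially the same route as the paper, which offers no explicit proof beyond the remark that the expected clause size ``directly implies'' the complexity bounds and otherwise rests on the results cited from \cite{Luby02}. Your linearity-of-expectation argument for the complexity and your appeal to Luby's ripple analysis (with $\delta$ set polynomially small) for the overhead are exactly the intended justification, and your observation that the overhead bound does not follow from the clause-size observation alone but from Luby's separate decoding analysis is an accurate reading of what the corollary actually relies on.
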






\newcommand{\px}{\tilde{x}}
\newcommand{\py}{\tilde{y}}
\newcommand{\pad}{\mathrm{pad}}
\newcommand{\unpad}{\mathrm{unpad}}
\newcommand{\zero}{\ensuremath{\mathbb{O}}}
\newcommand{\one}{\ensuremath{\mathbb{I}}}
\newcommand{\two}{\ensuremath{\mathbb{D}}}
\newcommand{\cOne}{\ensuremath{\mathbb{A}}}

\section{Circulant Matrices and Cyclotomic Rings}

We avoid time or memory consuming multiplication operations in finite fields by 
following the approach of cyclotomic rings \cite{silverman1999fast}. These are 
equivalent to circulant matrices, on which Circulant Cauchy 
codes \cite{Schindelhauer2013}, a systematic perfect erasure resilient 
code, are based. Here, we modify this concept for general prime numbers $p$, 
while  
previously, for word size $w$, it was required that $p=w+1$ is a prime number 
with $2$ as a primitive root.

For such $p$, the cyclotomic polynomial of degree $w$,   $\Phi(z) = z^w +\cdots 
+ z^2 + z + 1 \ ,$ is irreducible. The finite field $\GF(2^w)$ is a sub-ring of 
the ring of polynomials modulo $z^p -1$, since $z^p - 1 =  \Phi(z)(z-1)$. In 
order to get efficient multiplication and division operations for  $\GF(2^w)$  
each input $b= (b_0, \ldots, b_{w-1})$ is extended by a so-called {\em ghost 
bit} $b_{w} = 0$. Then, all operations are done in the ring with the extended 
ghost bit basis and retransformed for being output.  This way, Silverman 
\cite{silverman1999fast} reduces the complexity of school multiplication by a 
factor of 2.
 
In  \cite{Schindelhauer2013} an equivalent approach is followed, yet only multiplications and divisions with monoms $b_i z^i$ and binomials $b_i z^i + b_j z^j$ are used. Such operations have linear bit complexity $\OO(w)$ and all operation except the division by a binomial can be computed in constant number of steps by a processor with word length $\Theta(w)$. 

Now let $p=w+1$ be a prime number, where $\Phi(z)$ is not necessarily irreducible. 
We give now a formal description of our operations and the underlying ring $R_p 
= \{0,1\}^p$. We define for $x=(x_0, \ldots, x_{p-1})$, $y=(y_0, \ldots, 
y_{p-1})$ the addition and the multiplication. For $k \in \{0, \ldots, p-1\}$
we have
\[ 
	(x + y)_k = (x_k+y_k) \bmod{2}\ , \quad \quad 
	(x \cdot y)_k = \sum_{i=0}^{p-1} x_i y_{k-i \bmod p} \bmod 2 \eos
\] 
From now on addition on bits is always modulo 2, \ie the XOR operation, 
and $\overline{b}\coloneqq 1+b$ for $b\in\{0,1\}$.
We name the constants $\zero\coloneqq (0,\ldots, 0)$, $\one\coloneqq 
(1,0,\ldots, 0)$, and 
$\two\coloneqq  (0,1,0,\ldots,0)$, and $\cOne\coloneqq (1,\ldots, 1)$, each of 
length $w+1$. Therefore,
$\two^i =(0^{i},1,0^{w-i})$.
So, the multiplication with monomials $\two^i$ in $R_p$, as well its inverse 
operation, is a cyclic shift operation.
The multiplication with binomials 
$\two^i + \two^j$ needs $w+1$ XOR operations. The multiplication with polynomials with $k$ non-zero terms takes $(w+1) (k-1)$ XOR operations. We now deal with the problem of dividing by binomials.
 

We use the following transformation between external $w$-bit representation and internal {\em $p$-bit Ghost Bit Basis} 
representation $R_p$ for $x_i \in \{0,1\}$ and $p=w+1$:
\begin{eqnarray}
\pad(x_0, x_1, \ldots, x_{w-1}) &\coloneqq & (x_0, \ldots, x_{w-1},0) \\ 
\unpad(x_0, x_1, \ldots, x_{w}) &\coloneqq & (x_0 + x_{w}, x_1 + x_{w}, \ldots, 
x_{w-1} + x_{w})\ .
\end{eqnarray}
Under these transformations we get the following elements in the ghost bit base 
for each $x\in \{0,1\}^w$:
\begin{equation}
G(x) \ \ \coloneqq  \ \ \{(x_0, \ldots, x_{w-1}, 0), 
(\overline{x_0}, \ldots, \overline{x_{w-1}}, 1)\} 
\end{equation}
such that $\pad(u) \subseteq G(u)$ and $\unpad(G(x))  =  x$.  
For two sets $S_1,S_2$ we define 
$$S_1 + S_2 \coloneqq  \{s_1 + s_2 \ \mid \  s_1 \in S_1, s_2 \in S_2\} 
\quad\quad\quad
S_1 \cdot S_2: = \{s_1 \cdot s_2 \ \mid \  s_1 \in S_1, s_2 \in S_2\}$$ 
The equivalency class $G$ is closed in $R_p$ under addition and multiplication ($\exists!$ denotes unique existential quantification).
\begin{lemma} 
\label{lem:Gaddmult01} 
We have for all $ x,y \in \{0,1\}^w$
\begin{eqnarray}
 G(x) + G(y) &=& G(x+ y) \label{add} \\
\exists! z\in \{0,1\}^w: \ \ \ G(x) \cdot G(y) & = & G(z) \label{mal}
\end{eqnarray}
\end{lemma}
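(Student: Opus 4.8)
The plan is to reduce both identities to one structural observation about $G$. Since the last coordinate of $\pad(x)$ is $0$ and $\cOne=(1,\ldots,1)$ has length $w+1$, we may rewrite $G(x)=\{\pad(x),\,\pad(x)+\cOne\}$. More generally, for every $v\in R_p$ the two-element set $\{v,\,v+\cOne\}$ equals $G(\unpad(v))$, because $\pad(\unpad(v))$ is exactly whichever of $v,v+\cOne$ has last coordinate $0$; moreover $G$ is injective, which the excerpt already records as $\unpad(G(x))=x$. Hence, to prove \eqref{add} and \eqref{mal} it suffices to show that $G(x)+G(y)$ and $G(x)\cdot G(y)$ are each of the form $\{v,\,v+\cOne\}$ for a single $v\in R_p$; the value of $z$ in \eqref{mal} and the identification of the sum in \eqref{add} with $G(x+y)$ are then automatic, and uniqueness of $z$ is just injectivity of $G$.

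Writing $a=\pad(x)$ and $b=\pad(y)$, claim \eqref{add} is immediate, since $\cOne+\cOne=\zero$ gives
\[
  G(x)+G(y)=\{a+b,\ a+b+\cOne,\ a+b+\cOne,\ a+b+\cOne+\cOne\}=\{a+b,\ a+b+\cOne\},
\]
and $a+b=\pad(x+y)$, so this set is $G(x+y)$.

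For \eqref{mal} the preliminary fact I would establish is the behaviour of $\cOne$ under multiplication. Multiplication by a monomial $\two^i$ is a cyclic shift, hence $\two^i\cdot\cOne=\cOne$; expanding an arbitrary $v\in R_p$ into monomials then gives $v\cdot\cOne=\cOne$ when $v$ has an odd number of ones and $v\cdot\cOne=\zero$ otherwise. In particular $\cOne\cdot\cOne=\cOne$, because $p=w+1$ is an odd prime. Expanding the product of the two pairs,
\[
  G(x)\cdot G(y)=\{ab,\ ab+a\cOne,\ ab+b\cOne,\ ab+a\cOne+b\cOne+\cOne\}.
\]
Since $a\cOne,b\cOne\in\{\zero,\cOne\}$, a short case distinction on the parities of the numbers of ones of $x$ and $y$ shows this four-term multiset always reduces to exactly $\{ab,\ ab+\cOne\}$: if both parities are even the last term equals $ab+\cOne$ by $\cOne\cdot\cOne=\cOne$, and otherwise $a\cOne$ or $b\cOne$ is already $\cOne$. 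By the structural observation, $G(x)\cdot G(y)=\{ab,\ ab+\cOne\}=G(\unpad(\pad(x)\cdot\pad(y)))$, and this $z\coloneqq\unpad(\pad(x)\cdot\pad(y))$ is the unique such element by injectivity of $G$.

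The only genuine obstacle is the bookkeeping in this last step: one must check that $ab+\cOne$, and not merely $ab$, is always produced, regardless of the weights of $x$ and $y$. This is the single point where primality of $p$ --- equivalently, idempotence of $\cOne$ --- is used; for even $p$ the step would fail.
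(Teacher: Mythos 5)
Your proof is correct. The additive identity~(\ref{add}) is handled exactly as in the paper. For the multiplicative claim~(\ref{mal}) the paper takes a longer route: it first computes $G(x)\cdot G(\two^i)$ for a monomial, using the facts $\cOne\cdot\cOne=\cOne$, $\cOne\cdot\two^i=\cOne$ and $\pad(x)\cdot\cOne\in\{\zero,\cOne\}$ (by the parity of the weight of $x$), and then extends to arbitrary $y$ by decomposing $G(y)$ into monomial classes and inducting on the number of non-zero coefficients of $y$ via~(\ref{add}). You use the same three facts about $\cOne$ but apply them to the full product $\{a,a+\cOne\}\cdot\{b,b+\cOne\}$ in one step, closing with a three-case parity check. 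This buys you two things: you avoid the induction entirely, and you sidestep the paper's implicit appeal to distributivity of set products over set sums (for sets one only has $A\cdot(B+C)\subseteq A\cdot B+A\cdot C$ in general, and the case $y=\zero$ is degenerate in the monomial decomposition), so your argument is arguably tighter. Your explicit observation that $\{v,v+\cOne\}=G(\unpad(v))$ for every $v\in R_p$ is also only tacit in the paper, where it appears through statements of the form $\abs{\unpad(\cdot)}=1$. One cosmetic imprecision: what your argument uses is oddness of $p$ (equivalently, idempotence of $\cOne$), not primality — primality is only needed later, for inverting binomials in Lemma~\ref{lem:Gaddmult02} — but you essentially acknowledge this in your closing remark.
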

\begin{proof} For easier notion let $x0$ denote the $w$-bit \emph{symbol} $x$ 
followed 
by 0 and $x1$ denote $x$ followed by 1.\\ 
Equation~(\ref{add}):
Recall that $\overline{x}$ denotes the bitwise negation of the vector $x$ and 
the bitwise XOR is 
denoted by the addition.
\begin{eqnarray*}
G(x) + G(y) &=& \{x0,\overline{x}1\}+\{y0, \overline{y}1\} \\
                   &=& \{(x + y) 0, \overline{(x + y)} 1\} \\
                   &=& G(x+y)
           \end{eqnarray*}
Equation~(\ref{mal}):
First we consider the monomial $y=2^i=(0^i10^{w-i})$ and get
\begin{eqnarray*}
G(x) \cdot G(2^i) & = &  \{(x_0,\ldots, x_{w-1},0), \overline{(x_0,\ldots, 
x_{w-1},0)}\} \cdot 
		\{ \two^i, \overline{\two^i}\} \\
		& = &  \{(x_0,\ldots, x_{w-1},0), \cOne+(x_0,\ldots, x_{w-1},0)\} \cdot 
		\{ \two^i, \cOne+\two^i\} \\
		& = &  \{(x_0,\ldots, x_{w-1},0)\cdot \two^i, \cOne\cdot 
		\two^i+(x_0,\ldots, x_{w-1},0)\cdot \two^i,\\
		& & \ \ (x_0,\ldots, x_{w-1},0)\cdot \cOne+(x_0,\ldots, x_{w-1},0)\cdot 
		\two^i,\\
		& & \ \ \cOne\cdot \cOne+(x_0,\ldots, x_{w-1},0)\cdot \cOne + 
		\cOne\cdot \two^i+(x_0,\ldots, x_{w-1},0)\cdot \two^i\}\ .
\end{eqnarray*}
Note that for even $w$ we have 
\begin{eqnarray*}
\cOne\cdot \cOne & = & \cOne\\
\cOne\cdot \two^i & = & \cOne\\
(x_0,\ldots, x_{w-1},0)\cdot \cOne & = & \left\{\begin{array}[c]{ll}
\zero & \text{if the number of $j$ with $x_j=1$ is even}\\
\cOne & \text{otherwise.}
\end{array}\right.
\end{eqnarray*}
Thus we get
\begin{eqnarray*}
G(x) \cdot G(2^i) & = &  \{(x_0,\ldots, x_{w-1},0)\cdot \two^i, 
\cOne+(x_0,\ldots, x_{w-1},0)\cdot \two^i\}\\
& = &        \{(x_{w-i+1}, \ldots,  x_{w-1} , 0,           x_0,\ldots, 
x_{w-i}), 
	 \overline{(x_{w-i+1}, \ldots,  x_{w-1}}, 1, \overline{x_0,\ldots, 
	 x_{w-i})}\}
\end{eqnarray*}
Hence, $|\unpad(G(x) \cdot G(2^i))| = 1$.  Now consider $y= (y_0, \ldots, 
y_{w-1})$ and by using (\ref{add}) we have 
\[ G(x) \cdot G(y) =   \sum_{i=0}^{w-1} y_i  \cdot G(x)\cdot G(2^i)\ .\]
From $|\unpad(x + y)| =1$ the claim follows by an induction over the number of 
non-zero components of $y$.
\end{proof}

Multiplication with monomial and binomials can be efficiently inverted. Note that this observation cannot be extended to other polynomials unless $2$ is a primitive root for $p=w+1$. 
\begin{lemma}
\label{lem:Gaddmult02} 
We have for all $y \in \{0,1\}^w$,  $i,j \in \{0, \ldots, w\}$, $i\neq j$
\begin{eqnarray}
\exists! x\in \{0,1\}^w: \ \quad\quad \quad\quad\quad  G(x)\cdot \two^i &=& G(y) \label{monomial}\\
\exists! x\in \{0,1\}^w: \  \quad\quad G(x)\cdot (\two^i+ \two^j) &=& G(y) \label{binomial}
\end{eqnarray}
In both cases $x$ can be computed with $w$ bitwise XOR operations.
\end{lemma}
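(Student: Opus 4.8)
The plan is to prove each of the two inversion statements separately, treating the monomial case as a warm-up for the binomial case.

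For the monomial equation~(\ref{monomial}), I would argue directly. Multiplication by $\two^i$ in $R_p$ is a cyclic shift of the coordinate vector by $i$ positions (as already observed in the excerpt), hence it is a bijection on $R_p = \{0,1\}^p$ whose inverse is the cyclic shift by $-i$, i.e.\ multiplication by $\two^{p-i}$. Given $G(y)$, pick any representative $v \in G(y)$, cyclically shift it by $-i$ to obtain $u$, and observe $u \cdot \two^i = v \in G(y)$. By Lemma~\ref{lem:Gaddmult01} (equation~(\ref{mal})), $G(u)\cdot G(\two^i)$ is again of the form $G(x)$ for a unique $x$, and since $u \in G(u)$ and $\two^i \in G(\two^i)$ we get $v = u\cdot\two^i \in G(u)\cdot\two^i \subseteq G(x)$, forcing $G(x) = G(y)$ and $x = \unpad(u)$. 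Uniqueness of $x$ follows because $G$ is a partition of $R_p$ into classes of size two and the shift is a bijection. The cost is $w$ XOR operations: computing $\unpad$ of the shifted vector (the shift itself is free / rewiring).

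For the binomial equation~(\ref{binomial}), the key point is that $\two^i + \two^j$ is invertible in $R_p$ whenever $i \neq j$. Write $\two^i + \two^j = \two^i(\one + \two^{j-i})$; since $\two^i$ is a unit, it suffices to invert $\one + \two^d$ for $d = (j-i) \bmod p$, $d \neq 0$. The crucial algebraic fact is that $1 + z^d$ is coprime to $z^p - 1$ over $\GF(2)$: indeed $\gcd(1+z^d, 1+z^p)$ divides $1 + z^{\gcd(d,p)} = 1 + z = (z-1)$ when $p$ is prime and $0 < d < p$, and one checks $z=1$ is not a common root modulo $2$ since $p$ is odd, so the gcd is $1$. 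Hence $\one + \two^d$ is a unit in $R_p = \GF(2)[z]/(z^p-1)$, and its inverse can be computed once by the extended Euclidean algorithm. Given the inverse, multiplying $G(y)$'s representative by it and applying Lemma~\ref{lem:Gaddmult01} gives the unique $x$ exactly as in the monomial case. For the $\OO(w)$ XOR bound I would exhibit the explicit telescoping formula: the inverse of $\one + \two^d$ modulo $z^p - 1$ is $\sum_{t=0}^{(p-1)/\gcd} \two^{td \bmod p}$ up to a correction, so that solving $(\one + \two^d)x = y$ amounts to computing a prefix-sum-like sequence $x$ satisfying a linear recurrence $x_{k} = y_k + x_{k-d}$ around the cycle of length $p$, which is $p-1 = w$ XORs once the ``seed'' coordinate is fixed (the consistency of the recurrence around the cycle is exactly the invertibility just established).

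The main obstacle I anticipate is making the $w$-XOR claim fully rigorous rather than merely $\OO(w)$: one must show the cyclic recurrence $x_k = y_k \xor x_{k-d}$ closes up consistently and determines all of $x$ from a single free bit, and that this free bit is then pinned down by the ghost-bit normalization (choosing the representative in $G(x)$ with last coordinate $0$, or equivalently applying $\unpad$). This requires tracking how the recurrence traverses the residues modulo $p$ — a single cycle since $\gcd(d,p)=1$ — and checking the ``wrap-around'' equation is automatically satisfied, which is where the coprimality with $z^p-1$ re-enters. The parity/ghost-bit bookkeeping (even $w$, behavior of $\cOne$) that appeared in Lemma~\ref{lem:Gaddmult01} will have to be invoked once more to confirm the computed $x$ lies in $\{0,1\}^w$ after $\unpad$; beyond that the argument is routine.
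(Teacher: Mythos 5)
Your monomial argument is fine and matches the paper's. The binomial case, however, rests on a false algebraic claim: $\one + \two^d$ is \emph{not} a unit in $R_p = \GF(2)[z]/(z^p-1)$. Over $\GF(2)$ one has $\gcd(1+z^d,\,1+z^p) = 1+z^{\gcd(d,p)} = 1+z$ for $0<d<p$, and $z=1$ \emph{is} a common root ($1+1^d=0$ in $\GF(2)$, irrespective of the parity of $p$), so the gcd is $z+1$, not $1$. Consequently the extended Euclidean algorithm produces no inverse, and your telescoping formula cannot work either: $(1+z^d)\sum_{t=0}^{p-1}z^{td} = 1+z^{pd} = 0$ in $R_p$, i.e.\ the sum over a full orbit annihilates $1+z^d$ rather than inverting it. Multiplication by $\two^i+\two^j$ on $R_p$ is two-to-one with kernel $\{\zero,\cOne\}$ and image the even-weight vectors; it becomes invertible only modulo the ideal $\{\zero,\cOne\}$, i.e.\ exactly at the level of the classes $G(\cdot)$ (equivalently, in the factor $\GF(2)[z]/\Phi(z)$ of the CRT decomposition of $R_p$). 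That is the entire reason the lemma is stated in terms of $G$, and why the paper afterwards speaks of invertibility only ``with respect to the equivalency class $G$''.

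The error propagates into precisely the step you flag as the crux: the wrap-around consistency of the cyclic recurrence $x'_{k+d}=x'_k+y'_{k-i}$. Summing the $p$ equations shows the recurrence closes up exactly when the chosen representative $y'\in G(y)$ has even weight; since $\cOne$ has odd weight $p$, exactly one of the two representatives does, and if you start from the other one the recurrence still produces an $x'$ with $x'\cdot(\two^i+\two^j)=y'+\cOne$, which is harmless because $\unpad(y'+\cOne)=\unpad(y')=y$. That quotient-by-$\cOne$ bookkeeping --- not coprimality to $z^p-1$ --- is the correct closing argument, and it is essentially what the paper does: fix the ghost bit $x'_w=0$, propagate the recurrence through all $p$ indices (a single orbit since $p$ is prime and $j-i\not\equiv 0$), note that the seed $x'_w=1$ yields the complementary solution $x'+\cOne$ and hence the same $x=\unpad(x')$, and count $p-1=w$ XORs. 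Your recurrence plan is therefore salvageable, but the unit/Euclidean justification must be replaced by this argument for the uniqueness and consistency claims to be true.
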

\begin{proof} 
Equation~(\ref{monomial}):
Note that the multiplication with a monomial $\two^i$, \ie a cyclic shift by 
$i$ bits, maps an element of $R_{w+1}$ and its complement to another element 
and its complement. A cyclic shift of $w+1-i$ bits reverses 
this operation. Since $|\unpad(G(y) \cdot \two^{w+1-i})| =1$ and 
$|\unpad(G(x) \cdot \two^i)|=1$ there is no other solution than 
\[ x = \unpad(\pad(y) \cdot \two^{w+1-i}) \ .\]
Equation~(\ref{binomial}):
For  $x', y' \in R_{w+1}$ such that $x=\unpad(x')$, $y=\unpad(y')$ the equation 
$x' (\two^i + \two^j) = y'$ is equivalent to the equations 
\[ 
x'_{(k+i) \bmod (w+1)} + x'_{(k+j) \bmod (w+1)}   =   y'_{k} \ ,\quad 
{\hbox{for $k=0, \ldots, w$}}\]
or equivalently
\[  x'_{(k+ j-i) \bmod (w+1)} =  x'_{k} + y'_{(k-i) \bmod (w+1)}  \ ,\quad 
{\hbox{for $k=0, \ldots, w$}} \ .\]
We choose an element from the class $G(\unpad(x'))$ by setting $x'_{w} = 0$ and 
compute 
\[ x'_{(w+j-i) \bmod (w+1)} = y'_{(w-i) \bmod (w+1)} \ .\]
Then, \[ x'_{(w+2(j-i)) \bmod (w+1)} =  x'_{(w+ (j-i))\bmod (w+1)} + 
y'_{(w+(j-i)-i) \bmod (w+1)}\] and by induction
we can compute for all $k= 1, 2, \ldots, w$
\[x'_{(w+k (j-i)) \bmod (w+1)} =  x'_{(w+ (k-1)(j-i))\bmod (w+1) } + 
y'_{(w+(k-1)(j-i)-i) \bmod (w+1)}\ .\]
If $p=w+1$ is prime then all entries of $x'$ are determined, since we jump 
through all the indices of $x_i$ 
with step size $(j-i)$ modulo $(w+1)$. 
If we would have set $x_{w}=1$ we would have received the complement of $x'$ as 
solution, which 
is the only other solution. Further 
note, that in this process all equations are satisfied. Which proves that  $ x= 
\unpad(x')$ is a solution and the only one.  
Clearly, 
the number of XOR-operations is $p-1=w$.
\end{proof}

With respect to the equivalency class $G$, the multiplication with binomials is 
invertible in 
$R_{w+1}$ and thus we use for the above operation the notation $x = y \cdot (\two^i + \two^j)^{-1}$ for $x,y \in R_{w+1}$.

Consequently, Cyclone codes use the ring multiplication with monomials and 
binomials, since these operations can be implemented as cyclic shifts and 
bitwise XOR operations.


%
%
%
%


\section{Cyclone Codes}
\label{sec:cyclone}

Given a vector of input symbols $x_1, \ldots, x_n \in \{0,1\}^w$ we produce $m 
\geq n$ output symbols $y_1, \ldots, y_m \in \{0,1\}^w$ using a random process. 
We assume that $p=w+1$ is prime.
Recall the Robust Soliton distribution $\mu$ as defined in Section~\ref{sec:lt}.
Each code symbol $y_1, \ldots, y_m$ is constructed as follows.
\begin{algorithm}[H]
 \For{$\ell \leftarrow 1$ \bf{to} $m$}{
  $k_{\ell} \leftarrow $ randomly chosen according the probability mass function $\mu$\\
  $i_{\ell,1}, \ldots, i_{\ell,k_{\ell}} \leftarrow$ randomly, uniformly chosen distinct values from 
  $\{1, \dots, n\}$\\
 $f_{\ell, 1}, \ldots, f_{\ell, k_{\ell}} \leftarrow$ randomly, independently, uniformly chosen from 
 $\{0, \dots, w\}$\\
  $\displaystyle y_{\ell} \leftarrow \unpad\left({\textstyle \sum_{j=1}^{k_\ell}} 
	\two^{f_{\ell,j}} \cdot \pad(x_{i_{\ell,j}})\right)$\\
 }
 \caption{Encoding of Cyclone Codes}
\end{algorithm}
%

\begin{lemma}
\label{lem:cyclone01}
A Cyclone code symbol $y_{\ell}$ can be constructed with expected $w \cdot 
\Expect[k_\ell]$ XOR bit operations.
\end{lemma}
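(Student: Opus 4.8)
The plan is to count the bit operations in the single line of the encoding algorithm that constructs $y_\ell$, namely
\[
	y_{\ell} \leftarrow \unpad\Bigl({\textstyle \sum_{j=1}^{k_\ell}} \two^{f_{\ell,j}} \cdot \pad(x_{i_{\ell,j}})\Bigr),
\]
and then take expectations over the random clause size $k_\ell$. First I would observe that $\pad$ on a $w$-bit symbol just appends a ghost bit $0$ and costs no XORs, and that each multiplication $\two^{f_{\ell,j}} \cdot \pad(x_{i_{\ell,j}})$ is a cyclic shift of a $(w+1)$-bit word, which — as noted right after Lemma~\ref{lem:Gaddmult02} and in the discussion of multiplication with monomials $\two^i$ in $R_p$ — is a pure relabeling of bit positions and hence also costs no XOR operations. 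So the only cost is forming the sum of the $k_\ell$ shifted $(w+1)$-bit words, which takes $k_\ell - 1$ componentwise XORs of $(w+1)$-bit words, i.e. $(w+1)(k_\ell-1)$ bit XORs, and the final $\unpad$, which XORs the ghost bit into each of the $w$ remaining positions, costing a further $w$ bit XORs. This gives a deterministic bound of $(w+1)(k_\ell - 1) + w = w\cdot k_\ell + (k_\ell - 1) \le (w+1)\,k_\ell$ bit XOR operations for a clause of size $k_\ell$, which is $\OO(w\,k_\ell)$ and, up to the precise constant, matches the $w+1$ XORs per monomial stated earlier for multiplication with polynomials having $k$ nonzero terms.

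Second I would take the expectation over $k_\ell$, which is drawn from the Robust Soliton distribution $\mu$ independently of everything else. By linearity, the expected number of bit XORs to construct a single $y_\ell$ is at most $(w+1)\,\Expect[k_\ell] = w\,\Expect[k_\ell] + \Expect[k_\ell]$, which is $\OO(w\,\Expect[k_\ell])$; since $\Expect[k_\ell] = \OO(\ln(n/\delta))$ by the discussion preceding the LT corollary, this is also $\OO(w\ln(n/\delta))$, but the statement is phrased in the cleaner form $w\cdot\Expect[k_\ell]$ and that is what I would prove directly. I would state the bound as ``$(1+o(1))\,w\,\Expect[k_\ell]$'' or simply ``$\OO(w\,\Expect[k_\ell])$'' to absorb the additive $\Expect[k_\ell]$ term from the $\unpad$ step and the off-by-one in the sum.

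I do not expect a real obstacle here: the lemma is essentially a bookkeeping corollary of the cost accounting already established for the ring operations in Section on Circulant Matrices. The only point that needs a sentence of care is the claim that cyclic shifts (multiplication by $\two^{f_{\ell,j}}$) and $\pad$ are free of XOR operations — this is exactly the content of the remark ``the multiplication with monomials $\two^i$ in $R_p$ \ldots is a cyclic shift operation'' together with the definition of $\pad$, so I would just cite that. A secondary subtlety is whether one charges the $\unpad$ step: it costs $w$ XORs regardless of $k_\ell$, so it is dominated once $\Expect[k_\ell]\ge 1$ (which holds since clauses have size at least $1$), and I would fold it into the $\OO(\cdot)$ or into the $(1+o(1))$ factor rather than tracking it explicitly.
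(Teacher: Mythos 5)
Your proposal follows essentially the same accounting as the paper's proof: padding and cyclic shifts are free, the sum over the clause costs about $(k_\ell-1)$ word-XORs, unpadding costs $w$ bit-XORs, and the result follows by taking the expectation over $k_\ell$. You are in fact slightly more careful than the paper (which charges $(k-1)w$ for the sum and gets exactly $w\cdot k_\ell$, whereas you correctly note the summands are $(w+1)$-bit words, giving $w\,k_\ell + (k_\ell-1)$), but this only affects the lower-order term you already absorb into the $(1+o(1))$ factor.
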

\begin{proof}
 This follows from the observation that unpadding takes $w$ XOR operations and 
 the sum over a clause of $k$ elements takes $(k-1)w$ XOR operations. The 
 cyclic shift and the padding operations do not need any operation at all.
\end{proof}

\subsection*{Decoding}

When reading a code symbol $y_{\ell}\in \{0,1\}^w$ we compute the padded 
version in the ghost bit representation 
$ g_{\ell} = \pad(y_{\ell})$. Then, all computations are done in the ghost bit 
basis and the extra bit is only removed, when the data symbol is to be issued. 
So, for the \mbox{$\ell$-th} code symbol the following information is stored 
$g_{\ell} 
\in R_p$, $k_{\ell}, i_{\ell,1}, \ldots, i_{\ell,k_{\ell}} \in \{1, \ldots, 
n\}$, $f_{\ell, 1}, \ldots, f_{\ell, k_{\ell}}\in \{0, \ldots, w\}$. We call 
this a clause of size $k_{\ell}$ where the following invariant holds
\begin{equation} \label{invariant}
\unpad(g_\ell) = \unpad\left(  {\textstyle \sum_{j=1}^{k_{\ell}}} \two^{f_{\ell, j}} \cdot \pad(x_{ i_{\ell,j}}) \right) \ .
\end{equation} 
On these clauses we implement the following operations:
%

\begin{itemize}
\item \emph{Read code symbol}:
Given $y_{\ell}$, determine $g_{\ell} = \pad(y_\ell)$ and store all parameters 
of the clause.
\item \emph{Output data symbol}:
Given a clause $g_{\ell}$ of size 1 (a monomial), \ie $k_{\ell}=1$, determine 
the
output $x_{\ell,1}$:
\[
 x_{\ell,1}  = \unpad(\two^{p-f_{\ell,1}} \cdot g_\ell) = 
 \unpad(\two^{p-f_{\ell,1}} \cdot \two^{f_{\ell, 1}} \cdot \pad(x_{ 
 i_{\ell,1}})) \eos
\] 
\item \emph{Monomial reduction}:
Given a monomial $g_u$ ($k_u=1$) and a polynomial $g_v$ ($k_v\geq 1$) such that 
$i_{1,u} = i_{j,v}$, we reduce the size of $g_v$ by one by removing $i_{j,v}$ 
and $f_{j,v}$ and replacing $g_v$ with $g'_v = g_v + \two^{f_{j,v}-f_{1,u}} 
\cdot g_u$. 
\item \emph{Parallel edge resolution}: Given two clauses $g_u,g_v$ 
of size $k_{u} = k_{v} = 2$ with $(i_{1,u}, i_{1,v}) = (i_{2,u},i_{2,v})$ and 
$f_{1,u}-f_{2,v} \not\equiv f_{1,u}-f_{2,v} \pmod{p}$, we 
create two monomials $g_{a}$ and $g_{b}$, with $\unpad(g_a)=x_{i_{1,u}}$, 
$\unpad(g_b)=x_{i_{1,v}}$, replacing $g_u, g_v$ as follows:
\begin{eqnarray}
g_a &=&  \left(\two^{f_{1,u}+f_{2,v}} + \two^{f_{1,v}+f_{2,u}}\right)^{-1} \cdot 
         \left(\two^{f_{2,v}} g_1 + \two^{f_{1,v}} g_2 \right) \\
g_b &=&  \left(\two^{f_{1,u}+f_{2,v}} + \two^{f_{1,v}+f_{2,u}}\right)^{-1} \cdot 
         \left(\two^{f_{2,u}} g_1 + \two^{f_{1,u}} g_2 \right) 
\end{eqnarray}
The monomials $g_a$ and $g_b$ replace $g_u$ and $g_v$. If $f_{1,u}-f_{2,v} \equiv f_{1,v}-f_{2,u} \pmod{p}$, then one of the clauses is redundant and can be removed.
\item \emph{Edge Contraction}:
Given two connected clauses $g_u,g_v$ of size $2$, \ie $i_{1,v}=i_{2,u}$,
generate a new clause $g'_2$ with indices $(i'_{2,u}, i'_{2,v}) = (i_{1,u}, 
i_{2,v})$
where 
$f'_{2,u} = f_{1,u} + f_{2,u}$, $f'_{2,v} = f_{1,v}+f_{2,v}$, 
$g'_v = \two^{f_{2,u}} g_u + \two^{f_{1,v}} g_v$. Replace $g_v$ by $g_v'$.
\end{itemize}

\begin{lemma}
\label{lem:cyclone02}
All these clause operations need at most $4 w$ XOR bit operations and preserve 
the 
invariant~(\ref{invariant}).
\end{lemma}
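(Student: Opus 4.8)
The plan is to verify the claim for each of the five clause operations separately, since they are independent. For the XOR bit-operation count, I would appeal to the complexity facts already established: multiplication by a monomial $\two^i$ is a cyclic shift and costs nothing, multiplication by a binomial $\two^i+\two^j$ costs $w+1$ XOR operations (or $w$, via the inversion formula of Lemma~\ref{lem:Gaddmult02}), addition of two vectors in $R_p$ costs $p-1=w$ XOR operations (since the ghost bits can be handled implicitly or the count absorbed), and padding/unpadding costs $w$. With this bookkeeping, \emph{read code symbol} is free (just a pad), \emph{output data symbol} is a single cyclic shift plus an unpad, so $w$ operations; \emph{monomial reduction} is one cyclic shift of $g_u$ followed by one vector addition, so $w$ operations; \emph{edge contraction} is two cyclic shifts plus one addition, again about $w$; and \emph{parallel edge resolution} is the expensive case: two cyclic shifts and one addition to form each of $\two^{f_{2,v}}g_1+\two^{f_{1,v}}g_2$ and its sibling (so $2w$), plus two binomial divisions by $(\two^{f_{1,u}+f_{2,v}}+\two^{f_{1,v}+f_{2,u}})^{-1}$ at $w$ each, for a total of roughly $4w$. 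This is where the bound $4w$ comes from, so I would present parallel edge resolution last and in most detail.

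For the invariant, recall~(\ref{invariant}) states $\unpad(g_\ell)=\unpad\big(\sum_j \two^{f_{\ell,j}}\cdot\pad(x_{i_{\ell,j}})\big)$, equivalently $g_\ell\in G\big(\sum_j\ldots\big)$ by Lemma~\ref{lem:Gaddmult01}. So I would argue each operation at the level of the equivalence classes $G$, using that $G$ is closed under $+$ and under multiplication by monomials and binomials (Lemmas~\ref{lem:Gaddmult01} and~\ref{lem:Gaddmult02}). \emph{Read}: trivially $\pad(y_\ell)\in G(y_\ell)$ and $y_\ell$ already satisfies the invariant by construction of the code. \emph{Output}: multiplying a size-1 clause $g_\ell=G(\two^{f_{\ell,1}}\pad(x_{i_{\ell,1}}))$ by $\two^{p-f_{\ell,1}}$ and unpadding recovers $x_{i_{\ell,1}}$ because $\two^p=\one$ in $R_p$. \emph{Monomial reduction}: substitute $\two^{f_{1,u}}\pad(x_{i_{1,u}})$, which $g_u$ represents, into $g_v' = g_v+\two^{f_{j,v}-f_{1,u}}g_u$; the cross term becomes $\two^{f_{j,v}}\pad(x_{i_{j,v}})$, which cancels the corresponding summand in $g_v$, leaving exactly the size-$(k_v-1)$ clause claimed. \emph{Edge contraction}: the shared variable $x_{i_{2,u}}=x_{i_{1,v}}$ appears with coefficient $\two^{f_{2,u}}$ in $\two^{f_{2,u}}g_u$ (after using $g_u=G(\two^{f_{1,u}}\pad(x_{i_{1,u}})+\two^{f_{2,u}}\pad(x_{i_{1,v}}))$... wait, with the shift $\two^{f_{2,u}}$ applied it becomes $\two^{f_{2,u}+f_{1,v}}\pad(x_{i_{1,v}})$) and with coefficient $\two^{f_{1,v}+f_{2,v}}$... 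I would simply expand both shifted clauses and check the coefficient of the shared symbol matches so it cancels, leaving the new clause with indices $(i_{1,u},i_{2,v})$ and the stated coefficients. \emph{Parallel edge resolution}: this is the $2\times2$ linear solve; substitute $g_1=G(\two^{f_{1,u}}\pad(x_{i_{1,u}})+\two^{f_{2,u}}\pad(x_{i_{2,u}}))$ and likewise $g_2$ into $\two^{f_{2,v}}g_1+\two^{f_{1,v}}g_2$; the $x_{i_{2,u}}=x_{i_{2,v}}$ terms combine to $(\two^{f_{2,v}+f_{2,u}}+\two^{f_{1,v}+f_{2,u}})\pad(x_{i_{2,v}})$... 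I would carefully track which index equals which (the condition is $(i_{1,u},i_{1,v})=(i_{2,u},i_{2,v})$, i.e.\ $i_{1,u}=i_{2,u}$ is the shared symbol on one side), so that one symbol cancels after factoring, and the other is isolated with coefficient $\two^{f_{1,u}+f_{2,v}}+\two^{f_{1,v}+f_{2,u}}$, whose inverse is well-defined precisely under the stated non-congruence condition by Lemma~\ref{lem:Gaddmult02}, yielding the monomial $g_a=G(\pad(x_{i_{1,u}}))$; symmetrically for $g_b$.

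The main obstacle is purely bookkeeping: the formulas in the excerpt use both the double-subscript convention $f_{j,\ell}$ and occasionally a single-subscript shorthand ($g_1,g_2$ for $g_u,g_v$), and the congruence condition for parallel edge resolution as printed contains an obvious typo ($f_{1,u}-f_{2,v}\not\equiv f_{1,u}-f_{2,v}$ should read $f_{1,u}-f_{2,v}\not\equiv f_{1,v}-f_{2,u}$, matching the redundancy condition stated just below). I would silently adopt the corrected condition and a consistent indexing, then the cancellations are forced. A secondary subtlety is that all arithmetic happens on representatives in $R_p$ while the invariant is a statement about $\unpad$; since $G$ is closed under the operations used (Lemmas~\ref{lem:Gaddmult01}, \ref{lem:Gaddmult02}), it does not matter which representative we pick, so I would remark once that every step respects the equivalence class and then compute with arbitrary representatives. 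Finally, for the XOR count I would note that additions in $R_p$ are on $p$-bit vectors but, because one ghost bit is redundant, $w$ XORs suffice per addition, and each binomial division is $w$ XORs by Lemma~\ref{lem:Gaddmult02}; summing the worst case (parallel edge resolution) gives $4w$, which dominates the other four operations, establishing the bound uniformly.
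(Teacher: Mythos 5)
Your proposal is correct and takes the same route as the paper, whose entire proof is the single sentence ``This follows from the linear complexity of the padding, unpadding, addition, cyclic shift and binomial inverse operation.'' You simply carry out that bookkeeping explicitly (correctly identifying parallel edge resolution as the $4w$ worst case and the printed congruence condition as a typo), so there is nothing to add.
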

\begin{proof}
This follows from the linear complexity of the padding, unpadding, addition, 
cyclic shift and binomial inverse operation.
\end{proof} 

The first three operations are equivalent to LT codes. Resolving codes this way is called the 
\emph{single rule}.
\begin{lemma} \label{overhead}
Cyclone codes using only monomial reductions are equivalent to LT codes.
\end{lemma}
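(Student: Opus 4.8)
The plan is to exhibit a bijection between the objects manipulated by a Cyclone decoder restricted to monomial reductions and those manipulated by an LT decoder, and to check that corresponding operations match step for step. First I would set up the correspondence on the encoder side: an LT clause $y_\ell = \sum_{j=1}^{k_\ell} x_{i_{\ell,j}}$ corresponds to a Cyclone clause with the same indices $i_{\ell,1},\dots,i_{\ell,k_\ell}$ and coefficients $f_{\ell,1}=\cdots=f_{\ell,k_\ell}=0$; conversely, by Lemma~\ref{lem:Gaddmult01} and Lemma~\ref{lem:Gaddmult02}, multiplying a stored clause by $\two^{-f}$ for a suitable shift and re-padding lets us normalize any Cyclone clause so that one designated coefficient is $0$, and more to the point, the information content of a Cyclone clause is exactly an LT clause over the ``shifted'' data symbols $\two^{f}\cdot\pad(x_i)$. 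Since cyclic shifts are invertible in $R_p$ (again Lemma~\ref{lem:Gaddmult02}, the monomial case), knowing the shifted symbol is equivalent to knowing the symbol, so the state spaces of the two decoders are in bijection.

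Next I would verify that each of the three permitted operations is the image of the corresponding LT operation under this bijection. The \emph{read code symbol} operation stores $g_\ell=\pad(y_\ell)$ and the parameters, which is the Cyclone analogue of receiving $y_\ell$; with all $f=0$ the invariant~(\ref{invariant}) reduces to the LT clause identity. The \emph{output data symbol} operation with $k_\ell=1$ computes $\unpad(\two^{p-f_{\ell,1}}\cdot g_\ell)=x_{i_{\ell,1}}$, which corresponds exactly to the LT step of reading off a size-one clause as a decoded symbol (the extra cyclic shift just undoes the coefficient, and collapses to the identity when $f_{\ell,1}=0$). The \emph{monomial reduction} operation replaces $g_v$ by $g_v+\two^{f_{j,v}-f_{1,u}}\cdot g_u$, dropping the index $i_{j,v}=i_{1,u}$; under the bijection this is precisely the LT size-reduction rule $y' = y_\ell + x_{i_{\ell,d}}$, because the shift $\two^{f_{j,v}-f_{1,u}}$ aligns the known monomial $g_u$ with the occurrence of that data symbol inside $g_v$, after which the XOR cancels it — and Lemma~\ref{lem:cyclone02} guarantees the invariant is preserved. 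So the decoders perform the same sequence of reductions on corresponding inputs, and they succeed on exactly the same sets of received symbols.

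Finally I would note the encoder side already matches: by Lemma~\ref{lem:cyclone01} a Cyclone symbol costs $w\cdot\Expect[k_\ell]$ XOR bit operations and the clause sizes $k_\ell$ are drawn from the same Robust Soliton distribution $\mu$ used by LT codes, so the number of clauses of each size, the expected encoding cost, and the number of received symbols needed are identical; combined with the operation-by-operation decoding correspondence, this gives that Cyclone codes with only monomial reductions are equivalent to LT codes, in particular sharing their decoding success probability and their $\OO(m\ln(n/\delta))$ encoding/decoding complexity. The main obstacle I anticipate is being careful with the bijection: one must make sure that the nonzero coefficients $f_{\ell,j}$ really carry no extra information — i.e.\ that the ``shift'' group acting on each coordinate does not let a Cyclone clause encode more than an LT clause — which is exactly where the invertibility of monomial multiplication in the ghost-bit ring $R_p$ (Lemma~\ref{lem:Gaddmult02}) is needed, and where the argument would break for a coding scheme using genuinely different ring coefficients.
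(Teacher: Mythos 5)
Your proposal is correct and follows essentially the same route as the paper, whose proof is a two-line sketch: the clause-size distribution is identical and "the cyclic shift operations do not influence the resolution." You simply flesh out that "straight-forward observation" via the bijection induced by the invertibility of monomial multiplication (Lemma~\ref{lem:Gaddmult02}), which is exactly the intended argument.
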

\begin{proof}
Since the probability distribution for clause lengths is the same, it is 
sufficient to prove that the cyclic shift operations do not influence the 
resolution, which is a straight-forward observation.
\end{proof}

Monomial reduction and parallel edge resolution can be used to resolve large 
sets of data symbols connected by binomials. For this we consider the subset of 
clauses of size $2$ and the corresponding clause graph $G = (V,E)$ with $V=\{1, 
\ldots, n\}$ and edge set of clauses of sizes $1$ and $2$. Loops describe 
clauses of size $1$, where parallel edges are allowed. 
Recall that the excess of a connected component is defined as the difference between the number of edges and nodes in this component.

\begin{lemma}
\label{lem:cyclone03}
If in the clause graph $G$ there exists a connected component with excess 
$c\geq 0$ and $q$ nodes, then all data symbols of this components can be 
resolved with $\OO(qw)$ XOR bit operations with probability $1-1/(w+1)^{c+1}$. 
\end{lemma}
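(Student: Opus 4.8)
The plan is to mimic the Pair code decoding strategy from Section~4, transplanted into the ghost-bit ring $R_p$, and to account for the single failure mode that can occur, namely linear dependence in a parallel edge resolution. First I would fix a connected component $U$ of the clause graph $G$ with $q$ nodes and excess $c\ge 0$, so that $U$ has $q+c$ edges. Since $c\ge0$, $U$ contains at least one cycle (more precisely, the cycle space of $U$ has dimension $c+1$ once we count the one extra edge beyond a spanning tree; note loops count as cycles of length one here). I would then run the following procedure: while $U$ still contains a cycle, pick one, apply \emph{Edge Contraction} repeatedly around the cycle until it becomes a pair of parallel edges (or a loop, if the cycle had length one in the multigraph sense), and apply \emph{Parallel edge resolution}. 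By Lemma~\ref{lem:cyclone02} each contraction and each resolution preserves invariant~(\ref{invariant}), so a successful resolution yields two size-$1$ clauses (monomials) that correctly encode two data symbols of $U$. Once one data symbol in $U$ is decoded, \emph{Monomial reduction} propagates along the spanning tree of $U$ and decodes everything else, exactly as the single rule does for LT codes.

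Next I would handle the probability bound. A parallel edge resolution fails only when $f_{1,u}-f_{2,v}\equiv f_{1,v}-f_{2,u}\pmod p$; since the $f$'s are drawn independently and uniformly from $\{0,\dots,w\}$, which has $p=w+1$ elements, this coincidence has probability exactly $1/(w+1)$ — this is the ring analogue of the $\frac1{2^w-1}$ bound in the Corollary after the Pair code operations. When a resolution fails, the corresponding edge is (as the \emph{Parallel edge resolution} bullet states) redundant: its clause is a linear combination of the others in the cycle, so we discard that edge and its clause, reducing the excess by one, and try another cycle. Decoding of $U$ therefore fails only if \emph{every} attempt fails, and there are at least $c+1$ independent cycles to try (the spanning tree has $q-1$ edges, leaving $c+1$ non-tree edges, each closing a distinct fundamental cycle whose resolution uses a fresh coefficient). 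Because each attempt fails independently with probability $1/(w+1)$, the overall failure probability is at most $(1/(w+1))^{c+1}$, giving success probability $1-1/(w+1)^{c+1}$ as claimed.

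Finally, the complexity count. By Lemma~\ref{lem:cyclone02}, every clause operation costs at most $4w$ XOR bit operations. The monomial-reduction phase performs one reduction per tree edge, so $O(q)$ operations, i.e.\ $O(qw)$ bit operations. The cycle-resolution phase: a single cycle of length $t$ costs $t-2$ contractions plus one resolution, i.e.\ $O(t)=O(q)$ operations; and even allowing for up to $c+1$ failed attempts before success, since $c < q$ (we may assume $c=O(q)$, and in the regime of interest $c$ is $O(1)$ with high probability) the total is still $O(q)$ clause operations, hence $O(qw)$ XOR bit operations. Summing the two phases gives the stated $\OO(qw)$ bound.

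The main obstacle I anticipate is the bookkeeping around the independence of the $c+1$ resolution attempts and the claim that a failed resolution's clause is genuinely redundant in the span of the remaining clauses of the component — this needs the contracted edge's coefficient to be a nondegenerate product/sum of cyclic-shift monomials so that linear dependence of the two parallel clauses really does mean one is expressible via the cycle, and one must be careful that after discarding an edge the component stays connected (it does, since we only ever discard non-tree edges) and that the coefficient used in the next attempt is independent of those already exposed (it is, since each fundamental cycle contains a private non-tree edge whose two coefficients have not been touched). Making this rigorous is routine linear algebra over $R_p$ restricted to the image class $G$, using Lemma~\ref{lem:Gaddmult01} and Lemma~\ref{lem:Gaddmult02}, but it is where the care is needed.
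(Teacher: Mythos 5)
Your proposal matches the paper's proof in all essentials: contract each cycle to a pair of parallel edges, resolve them (failing with probability $1/(w+1)$ since the shift exponents are uniform over $p=w+1$ residues), discard the redundant edge and retry on a fresh cycle upon failure — giving $c+1$ attempts — and finish with monomial reductions along a spanning tree, with the $\OO(qw)$ bound following from the $\OO(w)$ cost per clause operation. Your explicit fundamental-cycle/independence bookkeeping is somewhat more careful than the paper's, which simply asserts that the operations preserve the uniform distribution of the shift factors, but the argument is the same.
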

\begin{proof}
If the excess is non-negative, then there exists a cycle $(e_1, \ldots, e_t)$ 
in this component. Using the edge contraction, we can shorten this cycle 
by replacing $e_{1}=\{u_1,u_2\}, e_{2}=\{u_2, u_3\}$ within the cycle 
by edge edge $e'_1= \{u_1, u_3\}$. After $t-2$ edge contractions we have reduced the 
cycle to a cycle of length $2$ consisting of two parallel edges. These two 
parallel edges can be resolved using the parallel edge resolution if 
$f_{1,1}-f_{2,2} \not\equiv f_{1,2}-f_{2,1} \pmod{p}$. Note that all operations 
uphold the uniform probability distribution for the cyclic shift factors. 
Hence, this equality does not holds with probability $\frac{1}{w+1}$. 
In this case 
we remove one edge of the considered cycle and thus we decremented excess of the 
component. If the new excess is 
still non-negative we can restart the process with a new cycle.

Otherwise, if $f_{1,1}-f_{2,2} \not\equiv f_{1,2}-f_{2,1} \pmod{p}$
we resolve $u_1$ and $u_2$ which allows us to resolve 
all data symbols of the connected components repeatedly using the monomial 
reduction.

Depending on the way we choose the edges within the edge contraction we can bound
the number of edge contractions by the number of nodes in 
the components plus the number of linear dependent edges, if we transform 
each cycle to a star.
\end{proof}

The resolution of connected components in the clause graph is called the 
\emph{double rule}. The decoding algorithm 
applies the single and double rule as long as they are successful. 
\begin{algorithm}[H]
  \For{$\ell \leftarrow 1$ \bf{to} $m$}{
    Read code symbol $y_{\ell}$\\
    \Repeat{code set not changed in this round}{
  	\Repeat {
 		Single Rule does not change clauses
	}{Apply Single Rule}
	\Repeat {
 		Double Rule does not change clauses
	}{Apply Double Rule}
	}
	}
 \caption{Cyclone Decoding}
\end{algorithm}

It is not fully understood why the double rule improves efficiency. An explanation is that the clause graph is an Erd\"o{s}-Renyi-Graph \cite{erdHos1959random}. For $n/2$ edges a giant component of sizes $\Theta(n^{2/3})$ appears and grows dramatically for increasing number of edges, where for every random edge four elements are added to the component \cite{bollobas1984evolution}. There it is also shown that for $n/2 + \omega(n^{2/3} \log^{1/2} n)$  edges the excess of the giant component is  non-negative, which allows the decoding of this component. Understanding this process in combination of the effect on clauses larger than $2$ might help to build better probability distributions for the Cyclone codes as the Robust Soliton distributions already allow. For the efficiency and coding overhead the following observation can be shown.

\begin{theorem}
Cyclone codes need at most the asymptotic time complexity and coding overhead 
as LT codes.
\end{theorem}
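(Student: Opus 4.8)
The plan is to compare Cyclone with LT codes on each of the three quantities in the statement — encoding cost, number of code symbols needed, and decoding cost — and to show Cyclone is asymptotically no worse, leaning on the four preceding lemmas. The organising fact is Lemma~\ref{overhead}: with only monomial reductions, Cyclone decoding coincides with LT decoding, so every data symbol LT can recover from a given multiset of code symbols Cyclone can recover too; the double rule is only ever applied in addition.

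For encoding, Lemma~\ref{lem:cyclone01} gives an expected cost of $w\cdot\Expect[k_\ell]$ XOR bit operations per code symbol, and since the $k_\ell$ follow the Robust Soliton distribution as in LT codes, $\Expect[k_\ell]=\OO(\ln(n/\delta))$; hence $m$ code symbols cost $\OO(mw\ln(n/\delta))$ XOR bit operations, i.e.\ $\OO(m\ln(n/\delta))$ word operations for constant $w$, as for LT. For the overhead, run the Cyclone decoder with the double rule postponed until the single rule is exhausted: by Lemma~\ref{overhead} the single rule alone already recovers the full LT-decodable set, and the double rule can only enlarge it. Therefore the value $m=n+\OO(\sqrt n\log n)$ that suffices for LT to decode everything with high probability also suffices for Cyclone, so Cyclone's coding overhead is at most LT's.

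For decoding cost, Lemma~\ref{lem:cyclone02} reduces the problem to counting operations, each costing $\OO(w)$. By Lemma~\ref{overhead} the monomial reductions and symbol-output steps of the single rule correspond step-for-step to LT decoding (the extra cyclic shifts are free), so their number is at most $\sum_{\ell=1}^m k_\ell=\OO(m\ln(n/\delta))$ in expectation, matching LT. For the double rule, Lemma~\ref{lem:cyclone03} shows that a connected component of $G$ of non-negative excess with $q$ nodes is resolved with $\OO(qw)$ XOR bit operations; summing over the (disjoint) components gives $\OO(nw)$, and since a parallel-edge resolution fails only with probability $\frac{1}{w+1}$ — discarding a redundant edge and being retried — the expected number of attempts per such component is bounded by $\frac{w+1}{w}\le 2$, leaving the total double-rule cost at $\OO(nw)$. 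As $m\ge n$ this is dominated by the single-rule term, so the expected decoding time is $\OO(mw\ln(n/\delta))$, again as for LT. Combined with the overhead bound, this gives the theorem.

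The step I expect to be the crux is the overhead half: one must ensure that the double rule — which, on a failed parallel-edge resolution, discards a code symbol — never destroys information that LT-style single-rule decoding would later have exploited. Postponing the double rule handles this: a discarded edge is, by construction, a linear combination of the remaining edges of its cycle, so removing it only lowers that component's excess without changing its vertex set, and once a component's excess is negative it behaves exactly as under pure LT decoding. (Alternatively, one may use the variant that abandons rather than prunes a component on which the double rule fails; then Cyclone provably decodes a superset of the LT-decodable set, at a cost of only $\OO(qw)$ per cycle-bearing component.) The remaining care is routine: verifying the expected operation counts, for which one uses the $\OO(\ln(n/\delta))$ expected clause size of the Robust Soliton distribution and the near-critical structure of the Erd\H{o}s--R\'{e}nyi clause graph.
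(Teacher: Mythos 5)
Your proposal is correct and follows essentially the same route as the paper: the overhead bound comes from Lemma~\ref{overhead} (LT decoding is subsumed by the single rule alone), and the time bound from observing that the double rule contributes only a linear number of operations per decoded component, with the single rule dominating. You are considerably more careful than the paper's two-sentence sketch — in particular your observation that a discarded edge is a linear combination of the remaining cycle edges, so the double rule never destroys information the single rule would have used, is a genuine gap in the paper's argument that you correctly identify and close.
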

\begin{proof}
The coding overhead follows from Lemma~\ref{overhead}, where we have seen that LT codes are included in the first three rules.

For the time complexity, we have seen that applying the pair rule adds only a linear number of operations for each decoded data symbol. Most time is still consumed by the single rules, where longer clauses are stripped from decoded data symbols. An operation which also takes place in LT codes, yet with some additional code symbols.
\end{proof}


\section{Fermat, Goldbach and the Word Length}

All operations except the binomial division can be performed word parallel, such that $w$ operations 
can be done in constant number of processor steps. It turns out that the binomial division is only 
needed once for each connectivity component, so its influence is marginal.

It is of particular interesting to use powers of $2$ as the word length $w$.
However, since $w+1=p$ is required to be a prime number, this would restrict us 
to Fermat primes of form $2^{2^i}+1$. They are the only numbers of form 
$2^j+1$ which can be prime.  Only $5$ Fermat primes are known, corresponding to 
word lengths $2$, $4$, $16$, $256$, and $65\,536$.

In \cite{Schindelhauer2013}, a small trick is introduced which generalizes the 
word lengths to all even positive numbers. For this we split a data symbol of 
length $w$ into two separate encodings of lengths $w_1$ and $w_2$, where 
$w_1+1$ and $w_2+1$ are prime numbers. This is possible if $w+2$ can be 
represented as sum of two primes $p_1, p_2$, which is the still open Goldbach 
conjecture, one of Hilbert's eight problems \cite{hilbert1900mathematische}. It 
is known that this is the case for $w \leq 4 \cdot 10^{14}$ 
\cite{richstein2001verifying}. Then, we send the code symbols 
in both encodings without overhead. However, we can only decode if both codes can be decoded, which might lead to an increased overhead, unless $w_1=w_2$. Then we can use the same parameters, twice. Because of the probabilities depending on the word 
length, the best choice is to choose $w_1$ and $w_2$ of nearly equal size, 
which leads to the following word lengths, \eg
\[
\begin{array}{rclcrclcrcl}
8      &=& 4+4      & \quad & 32  &=& 16+16       & \quad &  64  &=& 28+36 \\
128  &=& 58+70  & & 512 &=& 256+256 & & 1024 &=& 502 + 522 \eos
\end{array}
\]

\section{Simulations}
\label{sec:simulations}


We have run extended simulations to estimate the overhead of the Cyclone code. 
For this we generate a series of $s$ code symbols and count the number of 
decoded 
data symbols. We compare Cyclone codes using the Ideal Soliton distribution and 
the 
Robust Soliton distribution for the clause size with LT codes on the same 
clause size
distributions. Furthermore, we show the behavior of uniformly chosen random data symbols 
(suffering under the coupon collector problem) and Pair codes, where the pairs are chosen uniformly.

Figure~\ref{run100} shows the number of decoded data symbols (vertical axis),  
for a growing series of $s$ coded symbols for the above mentioned codes. The 
number of overall data symbols is $n=100$. The word size is $w=256$. For the 
Robust Soliton distribution we chose as parameters $\delta=0.5$ and $c=0.01$  
\cite{MacKay05}. The straight lines represent the average over 1000 tests. The 
dotted lines show the 90\%-percentile of the set of decoded data symbols from 
1000 runs with different random numbers.  In Appendix~\ref{app:simulations}, we 
show the corresponding runs for $n=1000$.

\begin{figure}[t]\centering
\includegraphics[width=\textwidth]%
	{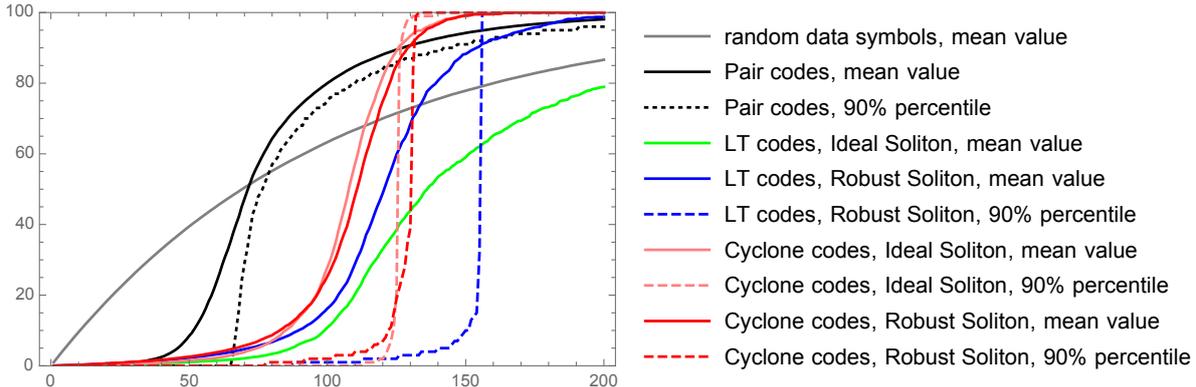}
\caption{The average number and 90\%-percentile of the number of decoded symbols with respect 
to the number of available fountain codes for 100 data symbols and 1000 test
samples.\label{run100}}
\end{figure}

The probability that a data symbol is not covered after $m$ code symbols 
is described by 
$(1-\frac{1}{n})^m$, which can be approximated by $e^{-m/n}$ for large enough $n$. Hence, 
the expected number of available symbols follows the function $n (1-e^{-m/n})$, as the simulations 
clearly show. For Pair codes we see only few decoded symbols before $m=n/2$. At 
$m =n/2$ 
the random clause graph shows the appearance of the giant component of size $\Theta(n^{2/3})$. 
So, the probability of a cycle in such a component tends towards $1$ and the decoding starts. 
In the long run, Pair codes suffer from a reduced coupon-collector problem, 
where the upper limit 
of the function is $n(1-e^{-2m/n})$, since the probability that a data node is not covered in 
the clause graph is $(1-\frac{2}{n})^m$.

It is known that LT codes using the Ideal Soliton distribution do not perform 
well. For 
$n=100$ and $n=1000$ it behaves worse than sending random data symbols. Hence, it is quite 
surprising that for $n=100$ a Cyclone code performs even better than Cyclone codes with respect 
to the Robust Soliton distributions. In the median 118 code symbols (overhead 18\%) are sufficient 
and for only 10\% of the samples more than 136 code symbols were necessary. Cyclone codes with Robust Soliton distribution needs more than 138 code symbols for only 10\% of the samples and 125 in the 
median. The overhead is calculated as the relative number of extra code symbols in order to decode 
the data symbols, \ie $\frac{m-n}{n} = \frac{m}{n}-1$. For LT codes it is known 
that they have 
considerable overhead for such small number of data symbols. It is recommended 
to use at least $n \geq 10\,000$.

Figure \ref{exp-18} takes this into account by increasing the number of data 
symbols $n=2^i$ for $i=1, \ldots, 18$, \ie $n=2,4,8, \ldots, 262\,144$, we have 
performed $1000$ tests with a series of fountain code symbols, where we 
stopped the test each time at the $m$-th code symbol, when all data symbols 
could be computed. Then, we computed the overhead $m/n -1$. For random data 
symbols, Pair codes, LT codes with Ideal and Robust Soliton 
distribution  
($\delta=0.5$, $c=0.01$), and Cyclone codes with Ideal and Robust 
Soliton 
distribution we plotted the average overhead and the 90\%-percentile in a 
log-log-plot. The word size was chosen as $w=256$.

\begin{figure}[t]\centering
\includegraphics[width=\textwidth]%
	{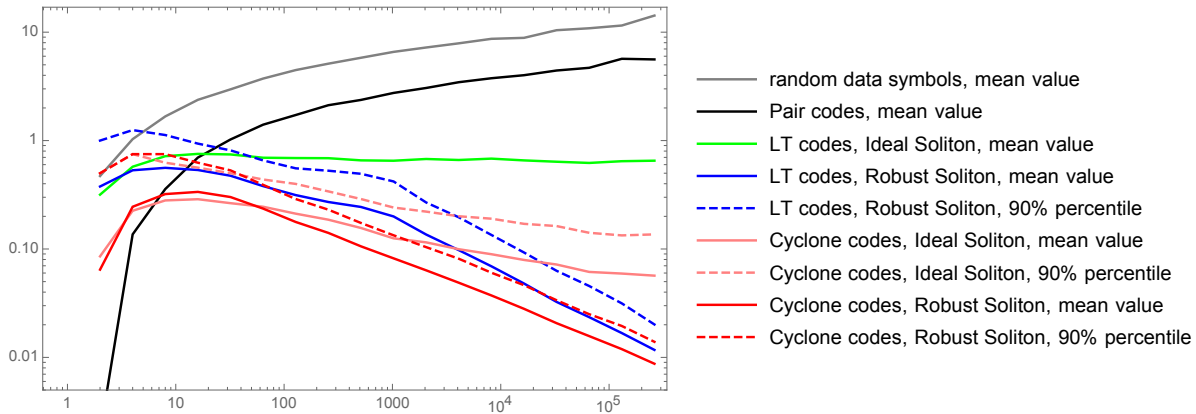}
\caption{A log-log-plot of the average (and $90\%$-percentile) relative overhead 
of extra code symbols to be sent in different fountain codes for increasing number 
of data symbols.\label{exp-18}}
\end{figure}

For small $n< 8$ Pair codes performs best, then the Cyclone code with Ideal Soliton distribution 
takes over until for $n\geq 128$ the Cyclone code with Robust Soliton consistently outperforms all 
other displayed codes. For $n=8192$ Cyclone codes with a Robust Soliton distribution have a median 
overhead of $3.4\%$, an average overhead of $3.9\%$, a standard deviation of 
$1.5\%$, and a 
$90\%$-percentile of $5.9\%$. The corresponding values for LT codes are a median overhead of $4.5\%$, 
an average overhead of $5.8\%$, a standard deviation of $3.5\%$, and a $90\%$-percentile of $10.3\%$.

\section{Conclusions}

Cyclone codes combine LT codes \cite{Luby02} with Pair codes 
\cite{ortolf2009paircoding} and decrease the message overhead without 
increasing the coding and decoding complexity. The coding overhead is $(m-n)/n$ 
where $m$ the number of code symbols to be received to decode $n$ data symbols. 
By the nature of XOR operations, LT codes can not benefit from the giant 
component in the random clause graph, \ie the graph described by code symbols 
combining two data symbols. There, every cycle corresponds to redundant code 
symbols, which can be immediately discarded. Pair codes solely rely on such 
complex components. 

While Pair codes only need a linear number of Galois field operations for 
coding and decoding, they suffers 
from the coupon collector problem, such that the coding overhead is $\Theta(\log n)$. For small $n$ it 
outperforms the other coding schemes, but for such cases systematic perfect erasure codes are still efficient 
enough. An example of such a systematic perfect efficient erasure resilient scheme with small number of XOR 
operations is the Circulant Cauchy code \cite{Schindelhauer2013}, where Boolean circulant matrices, also known 
as cyclotomic rings \cite{silverman1999fast}, have been used because of its 
efficiency.  In such rings the word length is restricted to numbers $w$, where 
$w+1$ is prime with 2 as a primitive root, \ie all numbers $2^0, 2^1, \ldots 
2^{w}$ modulo $w+1$ are distinct. It is unknown how many such prime number 
exist, while they do appear to be quite often in prime numbers. This 
restrictions has been resolved in  \cite{Schindelhauer2013} by partitioning $w= 
\sum_{i} w_i$ such that each $w_i+1$ is prime with 2 as a primitive root.

Here we drop the condition of $w+1$ having 2 has a primitive root, since we observe that the unitary 
cyclotomic ring has some elements that can be inverted, even if it is not isomorphic to a finite field. 
So, we extend the set of word lengths being powers of two from $w\in\{2,4\}$ to 
all numbers where $w+1$ are Fermat primes, namely $2$, $4$, $16$, $256$, and 
$65\,536$. Still the partitioning technique works as well for all even $w$, 
such 
that Cyclone codes exist for all even $w$ using $w_1+w_2=w$ if the Goldbach 
conjecture holds. Each read, write, addition, multiplication and division 
operation consist only of at most one cyclic shift operations and at most $w+1$ 
bitwise XOR operations. It also allows the usage of the word parallelism of 
processors, especially since simulations show that the coding overhead does not 
increase significantly for small $w$. Clearly, Cyclone codes can also be 
implemented using finite fields and general factors. However, we do not 
expect any significant benefit.

\paragraph*{Outlook}
Raptor codes are based and on LT codes without any alternative. The situation 
has changed with the presentation of Cyclone codes. Since they outperform LT 
codes, it is straight-forward to look at a modified Raptor 
code based on Cyclone codes.

Another open area of research are special probability distributions optimized 
for Cyclone codes. At the moment we have only used the Ideal and Robust Soliton 
distribution, both of which are optimized for the single rule of Cyclone 
codes. However, the double rule harnesses the complex components of the random 
graph. Here lies the potential of even less coding overhead. Yet, the behavior 
of complex connected components in the 2-clause random graph, dynamically 
changed by the ripple effect of some Soliton distribution, is poorly 
understood. The investigation of such dynamic random graphs, where complex 
components are removed while edges are added to the residual graph, is crucial 
for improving Cyclone codes.

\paragraph*{Acknowledgements}
We thank Christian Ortolf for his valuable input and many fruitful discussions. 

Parts of this work have been supported by the {\em Sustainability Center 
Freiburg}.


\bibliographystyle{plain}
\bibliography{tr}

\newpage
\appendix

\section{Additional Simulation Results}
\label{app:simulations}

Figures~\ref{run10},  \ref{run1000}, and \ref{run10000}  show the number of 
decoded data symbols (vertical axis),  
for a growing series of $m$ code symbols  for the above mentioned codes. The 
number of overall data symbols is $n=10$ in Figure~\ref{run10},  $n=1000$ in 
Fig.~\ref{run1000}, and $n=10\,000$ in Fig.~\ref{run10000}. The word size is 
$w=256$. For the 
Robust Soliton distribution we chose as parameters $\delta=0.5$ and $c=0.01$  
\cite{MacKay05}. The straight lines represent the average over $1000$ tests. 
The 
dotted lines show the 90\%-quantile of the set of decoded data symbols from 
$1000$ runs with different random numbers.

\begin{figure}[H]\centering
\includegraphics[width=\textwidth]%
	{Figures/run-10-legend}
\caption{The average number and 90\% quantile of the number of decoded symbols 
with respect to the number of available fountain codes for 10 data symbols and 
$1000$ test samples.\label{run10}}
\end{figure}

\begin{figure}[H]\centering
\includegraphics[width=\textwidth]%
	{Figures/run-1000-legend}
\caption{The average number and 90\% quantile of the number of decoded symbols 
with respect to the number of available fountain codes for $1000$ data symbols 
and $1000$ test samples.\label{run1000}}
\end{figure}

\begin{figure}[H]\centering
\includegraphics[width=\textwidth]%
	{Figures/run-10000-legend}
\caption{The average number and 90\% quantile of the number of decoded symbols 
with respect to the number of available fountain codes for $10\,000$ data 
symbols and $1000$ test samples.\label{run10000}}
\end{figure}

Figure \ref{Rsols} compares different parameters for the Robust Soliton 
distribution, \ie  $\delta=0.5$ and $c=0.01$ and $c=0.03$.   The number of data 
symbols is $n=2^i$ for $i=1, \ldots, 18$, \ie $n=2, 4, 8, \ldots, 262\,144$,
for  
tests. The test run increases the code size $m$ until all $n$ data symbols can 
be computed, then the overhead $m/n-1$ is computed. The word size is $w=256$. 
For random data symbols, Pair codes, LT codes with Robust Soliton distribution 
and Cyclone codes with Ideal Soliton and Robust Soliton distribution the 
average 
overhead is displayed  in a log-log-plot.

\begin{figure}[H]\centering
\includegraphics[width=\textwidth]%
	{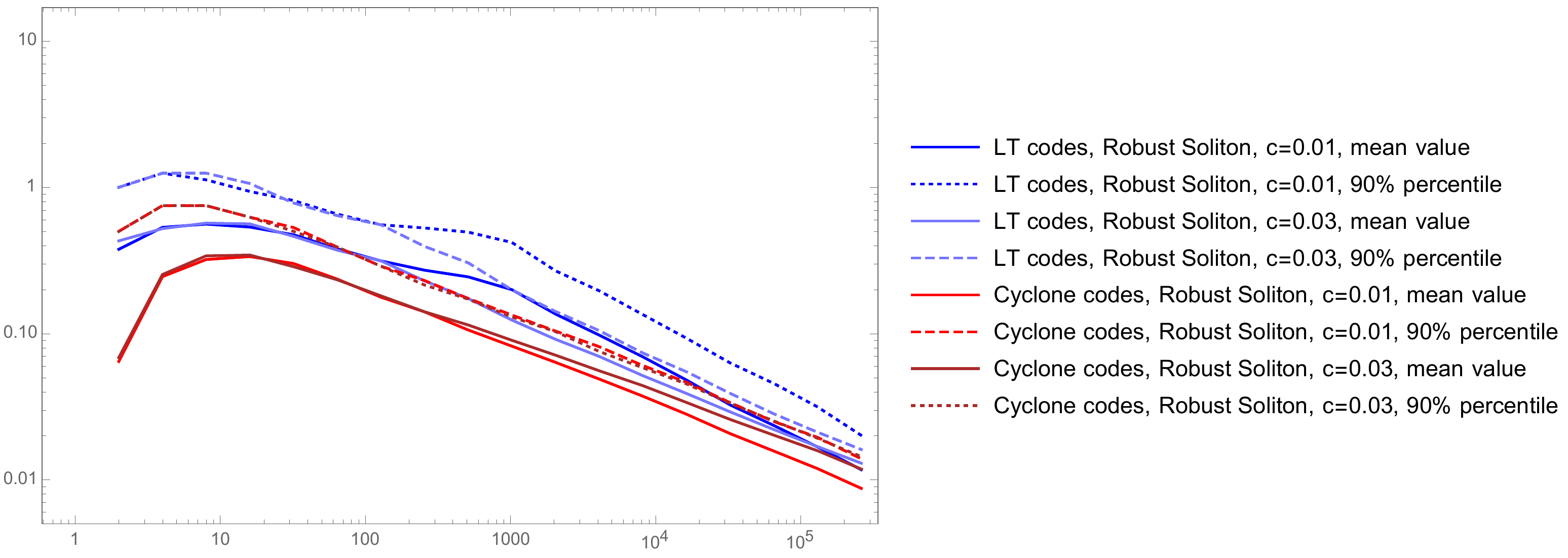}
\caption{The average overhead of Cyclone and LT codes for increasing number of data symbols for $\delta=0.5$ and $c\in\{0.01,0.03\}$.\label{Rsols}}
\end{figure}


%

Figures~\ref{fig:hist10}, \ref{fig:hist100}, \ref{fig:hist1000}, 
and~\ref{fig:hist10000} show histograms of the number of code symbols required 
to successfully decode $n=10$, $n=100$, $n=1000$, and $n=10\,000$ data symbols, 
respectively. Each histogram is the result of $1000$ 
trials, where both LT codes and Cyclone codes use the Robust Soliton 
distribution with $\delta=0.5$ and $c=0.01$. The word size is $w=256$.
Given the same amount of code symbols, Cyclone codes are able to decode
the complete set of data symbols significantly more often than LT codes.

\begin{figure}[H]
\small\flushleft
\includegraphics[page=1]{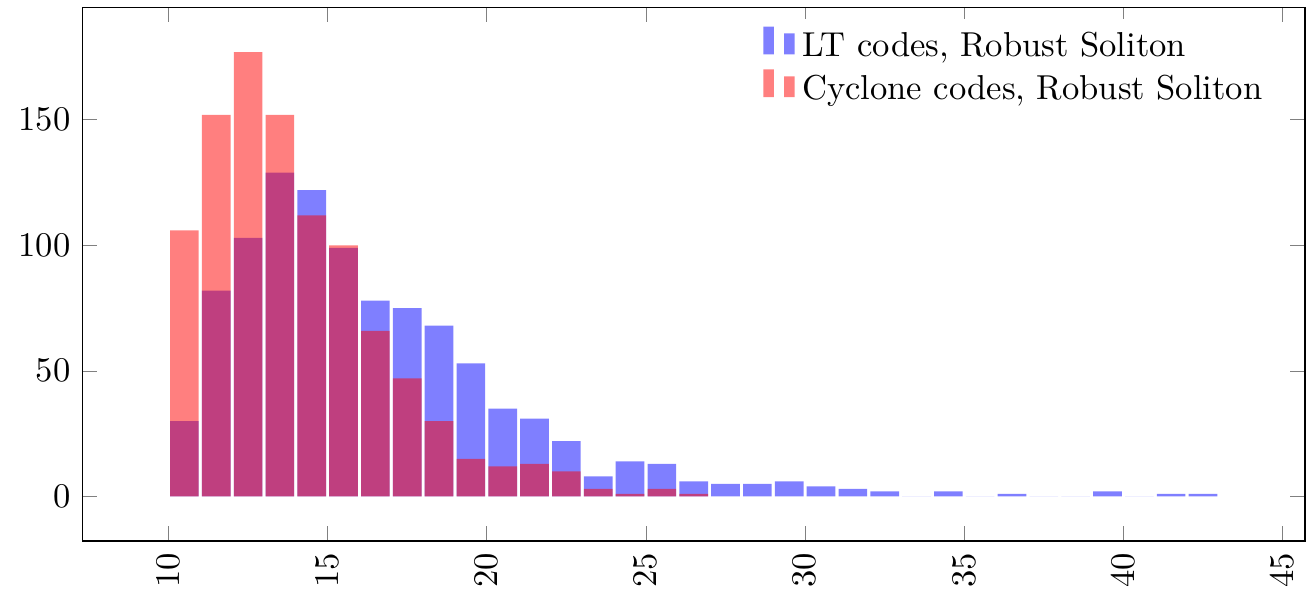}
\caption{Histogram of the number of code symbols required to decode
	all $n=10$ data symbols. The total number of trials is $1000$ with 
	parameters $\delta=0.5$ and $c=0.01$ for the Robust Soliton distribution.}
\label{fig:hist10}
\end{figure}

\begin{figure}[H]
\small\flushleft
\includegraphics[page=2]{Figures/histo001}
\caption{Histogram of the number of code symbols required to decode
	all $n=100$ data symbols. The total number of trials is $1000$ with 
	parameters $\delta=0.5$ and $c=0.01$ for the Robust Soliton distribution.}
\label{fig:hist100}
\end{figure}

\begin{figure}[H]
\small\flushleft
\includegraphics[page=3]{Figures/histo001}
\caption{Histogram of the number of code symbols required to decode
	all $n=1000$ data symbols. The total number of trials is $1000$ with 
	parameters $\delta=0.5$ and $c=0.01$ for the Robust Soliton distribution.}
\label{fig:hist1000}
\end{figure}

\begin{figure}[H]
\small\flushleft
\includegraphics[page=4]{Figures/histo001}
\caption{Histogram of the number of code symbols required to decode
	all $n=10\,000$ data symbols. The total number of trials is $1000$ with 
	parameters $\delta=0.5$ and $c=0.01$ for the Robust Soliton 
	distribution.}
\label{fig:hist10000}
\end{figure}

\end{document}